\theoremstyle{plain}
\newtheorem{theorem}{Theorem}[section]
\newtheorem{proposition}[theorem]{Proposition}
\newtheorem{lemma}[theorem]{Lemma}
\theoremstyle{definition}
\newtheorem{definition}[theorem]{Definition}
\theoremstyle{remark}
\newtheorem{remark}[theorem]{Remark}
\DeclareMathOperator*{\argmin}{argmin}
\title{Safe and Stable Control via Lyapunov-Guided Diffusion Models}
\author{%
  Xiaoyuan Cheng\thanks{Corresponding author: \texttt{ucesxc4@ucl.ac.uk}} \quad
  Xiaohang Tang \quad
  Yiming Yang \\
  University College London, United Kingdom
}
\begin{document}

\maketitle

\begin{abstract}
Diffusion models have made significant strides in recent years, exhibiting strong generalization capabilities in planning and control tasks. However, most diffusion-based policies remain focused on reward maximization or cost minimization, often overlooking critical aspects of safety and stability. In this work, we propose Safe and Stable Diffusion ($S^2$Diff), a model-based diffusion framework that explores how diffusion models can ensure safety and stability from a Lyapunov perspective. We demonstrate that $S^2$Diff eliminates the reliance on both complex gradient-based solvers (e.g., quadratic programming, non-convex solvers) and control-affine structures, leading to globally valid control policies driven by the learned certificate functions. Additionally, we uncover intrinsic connections between diffusion sampling and Almost Lyapunov theory, enabling the use of trajectory-level control policies to learn better certificate functions for safety and stability guarantees. To validate our approach, we conduct experiments on a wide variety of dynamical control systems, where $S^2$Diff consistently outperforms both certificate-based controllers and model-based diffusion baselines in terms of safety, stability, and overall control performance.
\end{abstract}

\section{Introduction}
Real-world control tasks often go beyond simple cost minimization or reward maximization, requiring safety \citep{gu2022review} and stability \citep{bacciotti2005liapunov} as essential requirements in various fields such as robotics and aerospace. Specifically, \textit{safety} ensures a risk-free trajectory during control, while \textit{stability} drives the system toward convergence to achieve a desired goal. However, achieving both safety and stability remains an open problem in control theory due to the complexity of synthesizing numerous non-convex constraints, including both equalities and inequalities \citep{hewing2020learning}, within a single formulation.


\textbf{Model Predictive Control.} A common approach to address this challenge is to leverage convex optimization techniques, minimizing the accumulated costs under multiple constraints \citep{camacho2007constrained, zeilinger2014soft}. One of the most well-known methods is the model predictive control (MPC) \citep{kouvaritakis2016model} and its variants \citep{grne2013nonlinear, grune2017nonlinear}. While MPC is widely used to ensure safety and stability, the policies derived from MPC algorithms often remain suboptimal in execution \citep{scokaert1999suboptimal}. Moreover, MPC can easily become infeasible for some problems with many safety constraints and tight control limitations. Another significant drawback of MPC is its computational complexity \citep{kwon2005receding}. Achieving better performance typically requires a longer receding horizon, making MPC inefficient for high-dimensional nonlinear problems.

\paragraph{Certificate-Based Method.} Another approach to ensuring safety and stability in control relies on certificate functions, such as control Lyapunov functions (CLFs) \citep{ames2012control,artstein1983stabilization,sontag1983lyapunov} and control barrier functions (CBFs) \citep{tee2009barrier, ames2014control, romdlony2016stabilization}. However, identifying a valid certificate function is challenging and highly problem-dependent \citep{ahmadi2016some}, as most existing methods rely on convex optimization with known system dynamics \citep{scherer2000linear,papachristodoulou2002construction}. In addition, these approaches struggle to scale to high-dimensional control tasks. To improve the generality of certificate-based methods, several approaches have been proposed to learn certificate functions from data \citep{alfarano2023discovering, chang2019neural, zhou2022neural, zhang2022learning, sun2021learning}. These techniques have been shown effective in control-affine dynamics, enabling the learning of a control policy based on optimized certificate functions. Specifically, the policy is generated by solving step-wise quadratic programs (QPs) using learned parameterized certificate functions and a nominal policy. Two representative methods in this field are proposed in \cite{qin2021learning} and \cite{dawson2022safe}. The former \citep{qin2021learning} formulates an approach to synthesizing CBFs for goal-oriented multi-agent control tasks, while the latter \citep{dawson2022safe} introduces the control Lyapunov barrier function (CLBF) which ensures safe and stable control performance \citep{romdlony2016stabilization}. However, synthesizing QP-based control with certificate functions remains challenging \citep{so2023solving,xiao2021high}, due to several factors: (1) QP formulation typically requires control-affine dynamics; (2) ensuring the existence of a control policy for step-wise greedy optimization often requires the introduction of slack variables, which can lead to globally inconsistent behavior; and (3) jointly learning the certificate and policy may lead to an infeasible QP, destabilizing training and degrading certificate quality.


\begin{wrapfigure}{r}{0.55\linewidth}
  \centering
  \vspace{-1em}
  \includegraphics[width=\linewidth]{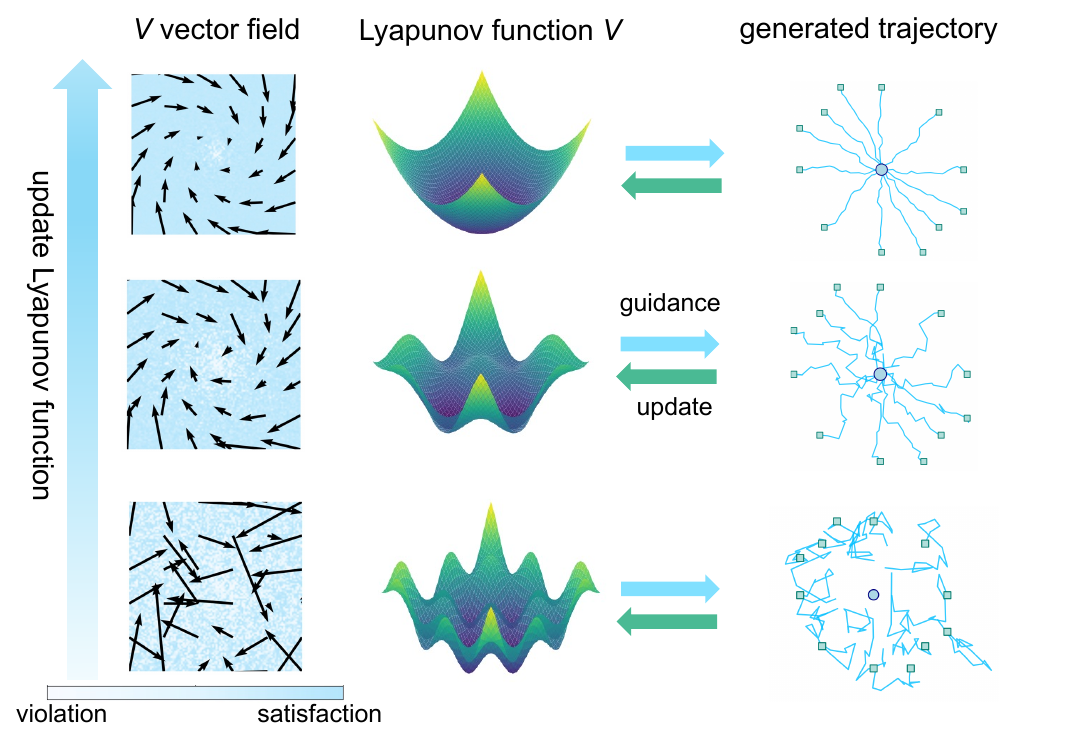}
  \caption{Overview of $S^2$Diff. Bottom to top: as guidance function improves, vector fields align better with the goal, Lyapunov landscapes get smooth, and trajectories converge more reliably. Right to left: the generated policy is guided by the Lyapunov function, and diffusion sampled trajectories are used to update the Lyapunov function.}
  \label{fig:overview}
\end{wrapfigure}

\textbf{Contribution.} To address the aforementioned challenges in gradient-based methods, we propose \textbf{Safe and Stable Diffusion ($S^2$Diff)}, a novel sampling-based method based on model-based diffusion planning framework\footnote{Model-based diffusion planning here refers to planning with known dynamics as defined in \citep{pan2024model}.}, where we learn the certificate functions based on policies sampled via diffusion models and do policy improvement via guided policy sampling iteratively. Crucially, since the guidance is a learned certificate functions (CLBF), the control sequence obtained by guided diffusion sampling is guaranteed to be \textit{safe and stable}. Unlike previous MPC or QP-based methods, our approach goes beyond step-wise greedy policies for control-affine dynamics, avoids the use of slack variables, and can be extended to trajectory-level optimization for general differentiable dynamics (the core idea is illustrated in Figure \ref{fig:overview}). Theoretically, we uncover intrinsic connections between diffusion sampling and Almost Lyapunov theory: \textit{the diffusion-sampled certificate function is not merely a weak relaxation of classical methods — it constitutes a realization of Almost Lyapunov theory, which prioritizes global convergence rather than pointwise strict descent.} In experiments, we evaluate $S^2$Diff across a wide range of dynamical systems, where it consistently outperforms both gradient-based methods and model-based diffusion approaches in terms of safety and stability. We summarize the comparison with other methods in Table \ref{Table: Comparisons of different methods}. 

\paragraph{Related Work.} In parallel with gradient-based approaches, an increasing number of works have explored diffusion models \citep{song2020denoising, song2020score, lipman2022flow} for control problems \citep{ajay2022conditional,lu2025makes,ubukata2024diffusion,wang2022diffusion,ding2024diffusion,lu2023contrastive,hansen2023idql}. One work focuses specifically on diffusion planning, as formulated in Diffuser \citep{janner2022planning}, a probabilistic diffusion model that generates action plans by iteratively denoising trajectories. Diffusion planning shows strong performance in long-horizon decision-making and trajectory generation at test time \citep{janner2022planning,lu2025makes}. To address safety in control, model-free safe diffusion planning methods have been proposed, integrating QP based on handcrafted control barrier functions (CBFs) \citep{xiao2023safediffuser}. However, designing valid CBFs prior remains a challenge \citep{ahmadi2016some}, and the stability guarantee of diffusion policies are still unexplored.

\begin{table*}[htbp]
    \centering
    \caption{Comparison of existing methods and $S^2$Diff (ours). Policy inference includes, quadratic programming (QP), non-convex optimization (NCO), and Diffusion Sampling (DS).}
    \vspace{0.5em}
    \renewcommand{\arraystretch}{1} 
    \begin{tabular}{ccccccc}
        \toprule
        \textbf{Methods} & Control-affine & Trajectory level & Safe & Stable & Policy Inference  \\
        \midrule
        CLBF-QP \citep{dawson2022safe}  & \ding{51} & \ding{55} & \ding{51} & \ding{51} & QP (fast) \\
        MPC & \ding{51}  & \ding{51} & \ding{55} & \ding{55} & NCO (slow) \\
        MBD \citep{pan2024model} & \ding{55} & \ding{51} & \ding{55} & \ding{55} & DS (fast) \\
        \midrule
        $S^2$Diff (ours) & \ding{55} & \ding{51} & \ding{51} & \ding{51} & DS (fast) \\
        \bottomrule
    \end{tabular}
    \label{Table: Comparisons of different methods}
\end{table*}


\section{Preliminary}
Consider a general differentiable nonlinear dynamical system given by
\begin{equation}
    \dot{x} = f(x, u), \label{eq:dynamics}
\end{equation}
where the state \( x \) and control input \( u \) belong to the compact sets \( \mathcal{X} \subseteq \mathbb{R}^n \) and \( \mathcal{U} \subseteq \mathbb{R}^m \), respectively. The function \( f: \mathcal{X} \times \mathcal{U} \to \mathbb{R}^n \) is globally Lipschitz. While most existing works on learning certificate functions assume control-affine dynamics \citep{dawson2022safe, sun2021learning, xiao2021high}, we argue that this assumption is unnecessary in our approach due to diffusion sampling. The symbol $\mathcal{X}_{s} \subseteq \mathcal{X}$ denotes the safe set and $\mathcal{L}$ denotes the Lie derivative. The control Lyapunov barrier function (CLBF) is defined as follows.

\begin{definition}[CLBF \citep{romdlony2016stabilization}] \label{definition: CLBF}
A potential function $V: \mathcal{X} \to \mathbb{R}$ is a CLBF if the function $V$ satisfies following conditions, for some constant $c, \lambda>0$,
\begin{subequations}
\begin{align}
    &\textit{Equilibrium:} && V(x_\star) = 0, \label{eq:Equilibrium} \\
    &\textit{Positivity:} && V(x) > 0, \quad \forall x \in \mathcal{X} \setminus \{x_\star\},  \label{eq:Uniqueness_of_Equilibrium} \\
    &\textit{Safe State:} && V(x) \leq c, \quad \forall x \in \mathcal{X}_{s}, \label{eq:Safe} \\
    &\textit{Unsafe State:} && V(x) > c, \quad \forall x \in \mathcal{X} \setminus \mathcal{X}_{s}, \label{eq:Unsafe} \\
    &\textit{Uniform Dissipation:} && \inf_{u \in \mathcal{U}} \mathcal{L}_fV(x) + \lambda V(x) \leq 0, \quad \forall x \in \mathcal{X} \setminus \{x_\star\} \label{eq:Contraction}.  
\end{align}
\end{subequations}    
\end{definition}
\begin{remark}
CLBFs are Lyapunov-like functions that simultaneously guarantee a system’s \textit{safety} and stability. Under the conditions specified in Equations Equation \eqref{eq:Equilibrium}, Equation \eqref{eq:Uniqueness_of_Equilibrium}, and Equation \eqref{eq:Contraction}, there exists a control policy that steers the system toward a unique equilibrium point with monotonic decrease. In general, Equation \eqref{eq:Contraction} implies an exponential stability as $\| x_t - x_\star \| \leq \mathcal{O}(\exp(- \lambda t))$, see proof in Appendix \ref{Appendix: Theoretical Guarantees}. In this framework, the sublevel and superlevel sets identify the safe and unsafe regions, respectively. Since the uniform dissipation of $V$ is along the trajectory, the state can always be kept in a safe region.  With the aid of CLBFs, we establish a rigorous mathematical framework to enforce safety and stability constraints.
\end{remark}

\paragraph{Problem Definition.} 



Following the conventional setting, the control problem can be formulated as an optimization problem in a fixed horizon $T$ with CLBF $V(\cdot)$. For $\forall t \in [T]$ and $\forall u_t \in \mathcal{U}$:
\begin{equation}
\begin{split}
    \argmin_{u_{1:T}} & \quad \sum_{t = 1}^{T} q(x_t, u_t) \\
    \text{s.t.} & \quad \underbrace{ \dot{x}_t = f(x_t, u_t)}_{\text{constraint of dynamics}},   \quad \underbrace{\mathcal{L}_{f}V(x_t) + \lambda V(x_t) \leq 0}_{\text{constraint of Lyapunov stability}},  \quad \underbrace{V(x_t) \leq c}_{\text{constraint of safety}}.\label{eq:target_problem}
\end{split}
\end{equation}

Here, thesacly cost function \( q: \mathcal{X} \times \mathcal{U} \to \mathbb{R}^+ \) is predefined and non-negative. The objective is to minimize the accumulated control cost while satisfying multiple constraints. The constraints include conditions derived from the CLBF and bounded control inputs, ensuring global safety and stability. 

\textbf{Challenges in Gradient-based Methods.} We briefly revisit gradient-based optimization techniques and its challenges. Traditionally, cost minimization problems for control policies are solved using LQR or MPC approaches, often without incorporating safe or stable constraints. The resulting solution, known as the nominal (or reference) control policy denoted as $u^{\text{nominal}}$. Then, the new cost can be simply transformed to $l^2$ distance with $u^{\text{nominal}}$ as $\| u - u^{\text{nominal}} \|^2$. Under this setting, many methods are proposed to incorporate certificate functions $V(\cdot)$ into the one-step policy optimization formulation as QP \citep{dawson2022safe, so2023solving,  sun2021learning, xiao2021high, so2024train}: for $\forall t \in [T]$,
\begin{equation}
    \argmin_{u} \| u - u^{\text{nominal}} \|^2, \quad
    \text{s.t.} \quad \mathcal{L}_{f}V + \lambda V \leq 0. 
    \label{gradient-based formulation}
\end{equation}
A feasible solution may not always exist in Equation \eqref{gradient-based formulation} if $u$ lying a bounded set $\mathcal{U}$. To make the optimization tractable, the constraint term is usually relaxed by introducing a large slack variable $z \in \mathbb{R}^+$ and coefficient $\lambda_{\text{penalty}}$ showing as 
\begin{equation}
\begin{aligned}
    \argmin_{u, z} & \| u - u^{\text{nominal}} \|^2 +  \colorbox{red!10}{$\lambda_{\text{penalty}} z$} , \quad
    \text{s.t.} \quad \mathcal{L}_{f}V(x_t) + \lambda V(x_t) \leq \colorbox{red!10}{$z$}. \label{regularised form}
\end{aligned}
\end{equation}
In this approach, the optimized control solution is a trade-off between cost minimization and safety/stability. However, solving the problem in Equation \eqref{eq:target_problem} using gradient-based optimization presents three challenges:  1). The construction of QP inherently requires dynamics to be control-affine, extending to general nonlinear systems is non‑trivial; 2). The step-wise greedy optimization of $u$ can lead to globally inconsistent behavior across the trajectory, often resulting in excessive reliance on a large slack variable $z$ to satisfy CLBF constraints; 3). The CLBF function $V$ is learned jointly with the optimized control policy $u$, which introduces a coupling issue: as $V$ evolves during training, it alters the feasible region of the QP, potentially destabilizing optimization and degrading the learned certificate. These limitations motivate the exploration of alternative approaches— diffusion guided-sampling.
We pose the following research question:
\begin{center} \textit{Can guided diffusion sampling overcome the limitations of gradient-based approaches while ensuring both safety and stability?} \end{center}
We will answer this question by linking Almost Lyapunov theory and diffusion sampling. 


\section{Method}
We propose Safe and Stable Diffusion ($S^2$Diff), a model-based diffusion framework that unifies diffusion sampling with Lyapunov-inspired certificates to address safety and stability in general nonlinear systems. The section is organized as follows:
(a) We formulate a CLBF-guided diffusion process for sampling trajectory-level policies without requiring slack variables and control-affine assumptions.
(b) We design a loss to iteratively refine CLBFs using diffusion-sampled trajectories.
(c) We provide theoretical insights linking diffusion sampling to Almost Lyapunov theory.


\subsection{Probabilistic Formulation and Diffusion Sampling} \label{subsection: trajectory sampling with safe and stable guarantees}

To apply diffusion for sampling a safe and stable control policy, we first reformulate Equation \eqref{eq:target_problem} as a probabilistic formulation. This formulation captures the key insight of the Almost Lyapunov theorem: even if the Lie derivative condition is locally violated with small probability, as long as such violations are confined to regions with minimal influence, the overall system can still exhibit long-term exponential decay, ensuring global stability and safety. By framing probabilistic formulation as a sampling task, we can leverage diffusion models to efficiently explore safe and stable trajectories. 


We define the target trajectory sequence distribution \( p(U) \) as a Gibbs measure, proportional to  
\begin{equation}\label{eq:sample_distribution}
    p(U) \propto p_{\text{safe}}(U) \, p_{\text{stable}}(U) \, p_{\text{cost}}(U),
\end{equation}
where \( U \) denotes the full trajectory sequence $( x_{1:T},  u_{1:T} )$ over a fixed horizon. Here, \( p_{\text{safe}}(U) \), \( p_{\text{stable}}(U) \), and \( p_{\text{cost}}(U) \) are scalar-valued functions that score the trajectory according to safety, stability, and accumulated cost, respectively. These functions shape the unnormalized distribution over policies, reflecting their relative importance. Leveraging the definition of CLBFs, we specify these components as follows:
\begin{equation}
p_{\text{safe}} \propto \prod_{t=1}^T \mathbb{1}_{\{V(x_t) \leq c\}},\,p_{\text{stable}} \propto \prod_{t=1}^T \mathbb{1}_{\{\mathcal{L}_{f}V(x_t) + \lambda V(x_t) \leq 0\}},\,
p_{\text{cost}} \propto \exp\Bigl(-\frac{1}{\gamma}\sum_{t=1}^{T} q(x_t, u_t)\Bigr),
\label{eq:guidance}
\end{equation}
where the variable notation $U = (x_{1:T}, u_{1:T})$ for each probability is omitted for simplification, \(\mathbb{1}_{\{\cdot\}}\) denotes the indicator function. The definitions of \( p_{\text{safe}} \) and \( p_{\text{stable}} \) adhere to the CLBF conditions as stated in Definition \ref{definition: CLBF}, while \( p_{\text{cost}} \) is expressed as an exponential function of the negative cumulative cost with a temperature parameter \( \gamma \). When a nominal control policy \( u_{1:T}^{\text{nominal}} \) is given, the cost term can be formulated as \(
p_{\text{cost}} \propto \exp\!\big(-\frac{1}{\gamma_1} \| u_{1:T} - u_{1:T}^{\text{nominal}} \|^2\big),
\)
which biases the sampling process toward the nominal policy and improves efficiency.

\textbf{Almost Lyapnunov Function Guidance.} As demonstrated in prior work \citep{boffi2021learning, giesl2016approximation}, it is theoretically impossible to learn a Lyapunov function that satisfies the required properties at every point, such as a strictly negative Lie derivative using only a finite number of sampled data points. Nevertheless, Almost Lyapunov theory \citep{liu2020almost} suggests that despite the existence of regions with non-negative Lie derivative, system trajectories can still converge to a sufficiently small neighborhood around the equilibrium point. Thus, instead of enforcing the second hard constraint in Equation \eqref{eq:guidance}, we can adopt a soft constraint formulation:
\begin{equation}
    p_{\text{stable}} \propto \exp\!\left(-\frac{1}{\gamma_2} \sum_{t=1}^T \|  \Bigl[\mathcal{L}_{f}V(x_t) + \lambda V(x_t)\Bigr]^+ \|^2 \right), \label{energy parameterization of stability}
\end{equation}
where \([z]^+ \triangleq \text{ReLU}(z)\) and \(0<\gamma_2 \ll 1\) is a low temperature parameter. When the temperature factor $\gamma_2$ is sufficiently small, the sampled control policy can guarantee trajectory-level safety and stability under the guidance of CLBF. We provide a statement to verify this point see Theorem \ref{Theorem: Safety and Stability with Almost Sure Guarantees}. Moreover, this probabilistic (soft) constraint formulation addresses a key limitation of gradient-based methods, which often require introducing large slack variables that can shift the feasible region of the control solution (see Equation \eqref{regularised form}). In contrast, the soft constraint in Equation \eqref{energy parameterization of stability} leaves the optimization problem \eqref{eq:target_problem} unchanged and improves sampling efficiency by avoiding the high rejection rate associated with strict indicator-based constraints.

\textbf{Control Trajectory Sampling.}
We adopt Monte Carlo score ascent within diffusion sampling to iteratively denoise trajectories toward the target distribution \( p(U) \). The forward and reverse diffusion process with Gaussian noise is defined as follows \citep{song2020score}. Denote scale factor $\bar{\alpha}_i = \prod_{k=1}^{i} \alpha_k$,
\begin{equation}
    p(U^i \mid U^0) \sim \mathcal{N}\big(\sqrt{\bar{\alpha}_i} U^0, (1 - \bar{\alpha}_i) I\big), \quad U^{i-1} = \frac{1}{\sqrt{\alpha_i}} \big( U^{i} + (1 - \alpha_i) \nabla_{U^i} \log p(U^i) \big). \label{Eq:N-step conditional}
\end{equation}
Notably, the unbiased estimator of score function relies on the posterior expectation \(\mathbb{E}_{U^0 \sim p(U^0 \mid U^i)} [U^0]\) estimated from sequential Monte Carlo \citep{del2000branching, pan2025model} (with provided proof in Lemma \ref{lemma:mscf}):
\begin{equation}
\nabla_{U^i} \log p(U^i)\approx  -\frac{1}{1 - \bar{\alpha}_i} \left( U^i - \sqrt{\bar{\alpha}_i} \, \mathbb{E}_{U^0 \sim p(U^0 \mid U^i)} [U^0] \right). \label{score function}
\end{equation}
In contrast to QP-based approaches for solving Equation \eqref{regularised form}, our sampling-based method avoids introducing slack variables and does not require a control-affine structure. After generating the clean trajectory $U^0$, we implement the control policy $u_{1:T}$ in environments.

\subsection{CLBF Update via Sampled Trajectories}
The process of diffusion sampling and CLBF updating is iterative. After generating control policies through diffusion sampling, we use the sampled trajectories to update the CLBF function iteratively. Let $\mathcal{D}$ denote the dataset of sampled trajectories; the CLBF update is formulated as follows:
\begin{equation}
\begin{split}
   \argmin_{\hat{V}} \, & \mathbb{E}_{x_{1:T} \sim \mathcal{D}} \bigg[ \sum_{t = 1}^T  \mathbb{1}_{x_t = x_\star}\| \hat{V}(x_t) \| + [-\hat{V}(x_t)]^+ + \mathbb{1}_{x_t \in \mathcal{X}_s} [\hat{V}(x_t) - c]^+ + \mathbb{1}_{x_t \in \mathcal{X} \setminus \mathcal{X}_s} [c - \hat{V}(x_t)]^+ \\
   & +  \alpha_1 [\mathcal{L}_f \hat{V} (x_t) + \lambda \hat{V} (x_t) + \epsilon]^+ +  \alpha_2[\hat{V}(x_{t+1}) - \hat{V}(x_t) + \lambda \hat{V}(x_t) + \epsilon]^+ \bigg]. \label{loss function}
\end{split}
\end{equation}
Here, $\alpha_1$ and $\alpha_2$ are two tunable parameters. Unlike the quadratic form used in prior work \cite{dawson2022safe}, we parameterize the CLBF with a general $N$-layer neural network $\hat{V} = W_{N}\sigma_{N-1}(W_{N-1}\cdots\sigma_{1}(W_1x))$, to handle nonconvex constraints. Our experiments show that such flexible parameterization outperforms the quadratic form in problems with nonconvex safe sets (see Section \ref{subsec: Control Performance Comparison}). Each component targets a fundamental control principle as:
\begin{itemize}
    \item The first term, $\mathbb{1}_{x_t = x_\star}\| \hat{V}(x_t) \|$, enforces $\hat{V}(x_*)=0$, as required by Equation \eqref{eq:Equilibrium}.
    \item The second term, $[-\hat{V}(x_t)]^+$, enforces positivity of the Lyapunov function as Equation \eqref{eq:Uniqueness_of_Equilibrium}.
    \item The third and fourth terms, $\mathbb{1}_{x_t \in \mathcal{X}_s} [\hat{V}(x_t) - c]^+$ and $\mathbb{1}_{x_t \in \mathcal{X} \setminus \mathcal{X}_s} [c - \hat{V}(x_t)]^+$, encode the sub-level and super-level set conditions, aligning with Equations \eqref{eq:Safe} and \eqref{eq:Unsafe}.
    \item The fifth term, $\sum_{t = 1}^T [\mathcal{L}_f \hat{V} (x_t) + \lambda \hat{V} (x_t) + \epsilon]^+$, enforces trajectory-level stability using diffusion-sampled control inputs.. The buffer term $\epsilon$ accounts for the small violation as formalized in Theorem \ref{Formal Theorem: Safety and Stability with Almost Sure Guarantees} (see Equation \eqref{l^2 ball bound}). Lie derivatives are computed via automatic differentiation, and minimizing this term reduces violations, enhancing global stability.
    \item Finally, the sixth term, $\sum_{t = 1}^T [\hat{V}(x_{t+1}) - \hat{V}(x_t) + \lambda \hat{V}(x_t) + \epsilon]^+$ with $x_{t+1} = x_t + f(x_t, u_t)$, complements the continuous constraint by regulating the discrete-time Lyapunov condition.
\end{itemize}
Our algorithm leverages model-based diffusion as its foundation, and the implementation details are presented in Appendix~\ref{appendix:implementation}.

\textbf{Diffusion Sampling vs. Gradient-based Optimization.}
Diffusion sampling, as formulated in Section \ref{subsection: trajectory sampling with safe and stable guarantees}, avoids the need for slack variables and does not rely on control-affine assumptions, enabling the effective exploration of complex, nonconvex control policy landscapes \citep{pan2024model}. As a result, the generated policies can be used to update CLBFs without altering the original optimization objective. In contrast, QP-based methods rely on carefully tuned relaxations and yield locally greedy solutions that may fail globally. Likewise, nonconvex MPC often gets stuck in local minima, limiting its ability to ensure global stability and safety.




\subsection{Theoretical Results}

In this section, we establish the theoretical guarantees of safety and stability for diffusion-sampled policies under the framework of an Almost Lyapunov function. Furthermore, we introduce an auxiliary theorem from a learning-theoretic perspective to demonstrate that diffusion-sampled policies give rise to an Almost Lyapunov function.

\begin{theorem}[Safety and Stability with Almost Sure Guarantees] \label{Theorem: Safety and Stability with Almost Sure Guarantees}
Let $\mathcal{X}$ be a compact state space and consider the continuously differentiable dynamical system $f$ in Equation \eqref{eq:dynamics}. Let $V:\mathcal{X}\to\mathbb{R}^+$ be a smooth positive definite function. Assume that there exist constants $\lambda>0$ and $\epsilon>0$, and a connected, non-self-overlapping and measurable set $\Omega\subset\mathcal{X}$ satisfying the small volume $\text{Vol}(\Omega)<\epsilon$, 
such that the following holds:
\begin{enumerate}
  \item[\textbf{(A)}] For every $x\in\mathcal{X}\setminus \Omega$, the Lie derivative of $V$ along $f$ satisfies
    \(
      \min_{u \in \mathcal{U}} \mathcal{L}_fV(x) < -\lambda\, V(x).
    \)
  \item[\textbf{(B)}] For $x\in \Omega$, we allow 
    \(
      \mathcal{L}_fV(x) \ge -\lambda\,V(x)
    \)
    without any further restrictions, i.e. no uniform dissipation condition.
\end{enumerate}
Then, there exist positive constants $\lambda_1$ and $M$, with $0 < \lambda_1 < \lambda$, such that for any $x_0 \in \mathcal{X}_s \subset \mathcal{X}$, the solution $x_t$ of Equation \eqref{eq:dynamics} under the diffusion-sampled policy satisfies, almost surely,
\begin{equation}\label{eq:decay}
V(x_t) \le \exp(-\lambda_1 t)V(x_0) + M\epsilon^{\frac{1}{n}}, \quad \forall t \ge 0.
\end{equation}
In other words, the influence of the “bad” region $\Omega$ introduces only an additive buffer term of order $\mathcal{O}(\epsilon^{\frac{1}{n}})$, ensuring that the overall decay remains \emph{almost} exponential over time. 
\label{theorem:s2_almost_sure}
\end{theorem}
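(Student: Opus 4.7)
The plan is to turn the theorem into a pathwise comparison estimate for $V$ along the closed-loop trajectory, and then lift this estimate to an ``almost surely'' statement using the probabilistic nature of the diffusion-sampled control. Outside the bad set $\Omega$, assumption (A) forces exponential contraction of $V$; inside $\Omega$, compactness of $\mathcal{X}\times\mathcal{U}$ together with smoothness of $f,V$ allows only bounded growth; and $\text{Vol}(\Omega)<\epsilon$ forces any sojourn in $\Omega$ to be short, so the cumulative effect of the bad region contributes only an additive buffer of order $\epsilon^{1/n}$ to the final bound.

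Concretely, I would first extract uniform constants $B=\sup\|f\|$ and $G=\sup\|\nabla V\|\cdot B$ from compactness and smoothness, so that $|\mathcal{L}_f V(x)|\le G$ globally and, wherever the diffusion sampler produces a control close to the minimizer in (A), $\mathcal{L}_f V(x)\le -\lambda V(x)$ for every $x\in\mathcal{X}\setminus\Omega$. Second, I would combine connectedness, the non-self-overlapping hypothesis, and the volume bound to deduce a diameter estimate of the form $\text{diam}(\Omega)\le C_1\,\epsilon^{1/n}$; together with the speed bound $\|\dot x\|\le B$ this yields a per-visit sojourn bound $\tau_{\text{visit}}\le C_2\,\epsilon^{1/n}$. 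Third, I would partition $[0,t]$ into alternating outside/inside intervals: on each outside interval Gronwall gives $V(x_{t_2})\le e^{-\lambda(t_2-t_1)}V(x_{t_1})$, while on each inside interval $V(x_{t_2})\le V(x_{t_1})+G\,\tau_{\text{visit}}\le V(x_{t_1})+C_2G\,\epsilon^{1/n}$. Iterating across all alternations and invoking the comparison lemma against the scalar ODE $\dot W=-\lambda_1 W+M'\,\epsilon^{1/n}$ for a suitable $\lambda_1\in(0,\lambda)$ that absorbs the fraction of time spent in $\Omega$, I would then obtain precisely $V(x_t)\le e^{-\lambda_1 t}V(x_0)+M\,\epsilon^{1/n}$, matching \eqref{eq:decay}.

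The step I expect to be the main technical obstacle is the geometric bound $\text{diam}(\Omega)\le C_1\,\epsilon^{1/n}$, since in general a connected set of small volume can be arbitrarily long and thin: the non-self-overlapping hypothesis must be used quantitatively (for instance through an isoperimetric or Lipschitz-domain argument) to control the aspect ratio of $\Omega$, and it is exactly this step that produces the exponent $1/n$ and fixes the constant $M$. The ``almost surely'' qualifier is then supplied by observing that the guided score in Equation \eqref{energy parameterization of stability}, in the low-temperature limit $\gamma_2\to 0$, concentrates on control sequences satisfying $\mathcal{L}_f V(x)\le-\lambda V(x)$ on $\mathcal{X}\setminus\Omega$; combined with Lemma \ref{lemma:mscf} and standard convergence of the sequential Monte Carlo score estimator, the sampled control attains the inequality in (A) outside $\Omega$ with probability one, so the deterministic comparison argument applies pathwise to almost every sampled trajectory.
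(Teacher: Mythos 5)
The plan runs into a genuine gap at exactly the step you flag as risky, and the paper avoids it by a different geometric observation. You try to control the \emph{diameter} of $\Omega$ and the \emph{sojourn time} inside it, and both bounds fail as stated. A connected, non-self-overlapping set of volume $<\epsilon$ in $\mathbb{R}^n$ can still be a long thin tube with $\mathrm{diam}(\Omega)$ of order $1$, so $\mathrm{diam}(\Omega)\le C_1\epsilon^{1/n}$ does not follow from the hypotheses. Moreover, even granting a small diameter, the speed bound $\|\dot x\|\le B$ gives a \emph{lower} bound $\tau\ge d/B$ on the time to traverse a distance $d$, not an upper bound; without a transversality or minimum-speed condition inside $\Omega$, the trajectory can dwell in $\Omega$ for arbitrarily long, so $\tau_{\text{visit}}\le C_2\epsilon^{1/n}$ is not implied. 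Your alternating outside/inside decomposition therefore does not close.

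The paper sidesteps both issues. It never bounds $\mathrm{diam}(\Omega)$ or sojourn time. Instead it uses the much weaker fact that a set of volume $<\epsilon$ has small \emph{inradius}: for any $x\in\Omega$ there is a point $x'\notin\Omega$ (in the complement of a slightly enlarged $\Omega_r$, with $\mathrm{Vol}(\Omega_r)\le\mathrm{Vol}(\Omega)+\mathcal{O}(r)$ via a Steiner/Minkowski-sum estimate) with $\|x-x'\|\le c_1\epsilon^{1/n}$. Then, using the regularity bounds on $\nabla f,\nabla V,\nabla^2 V$, it propagates the dissipation inequality from $x'$ to $x$ at the cost of an additive error $C\epsilon^{1/n}$. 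The net result of Step 1 is a \emph{uniform} differential inequality $\dot V\le -(1-\epsilon_1)\lambda V + C\epsilon^{1/n}$ (holding outside a small ball around the equilibrium) regardless of whether $x_t$ is inside or outside $\Omega$, so no alternation argument and no sojourn bookkeeping is needed — the Comparison Lemma applies directly. Your approach would only go through with additional quantitative hypotheses on the aspect ratio of $\Omega$ and on the transversality of the flow to $\partial\Omega$, neither of which the theorem assumes.

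The ``almost surely'' part is also handled quite differently. You argue by a low-temperature concentration heuristic that the sampled control satisfies (A) outside $\Omega$ with probability one, which would reduce the claim to a deterministic one. The paper instead models the closed-loop dynamics as an SDE with noise proportional to $\gamma_2$, applies It\^o's formula to $V$, shows that $Z_k=V(x_{t_k})$ is a supermartingale once $\gamma_2$ is small enough to keep the trace term below $\epsilon_2\lambda V$, and then obtains almost-sure exponential decay of the discrete sequence via Azuma--Hoeffding plus Borel--Cantelli, before returning to continuous time with the Comparison Lemma. This is the mechanism that actually produces the ``almost surely'' quantifier, and it permits occasional local increases of $V$ rather than requiring (A) to hold pathwise outside $\Omega$.
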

\textbf{Connection to Almost Lyapunov Theory.} Theorem \ref{theorem:s2_almost_sure} implies that the CLBF $V$ is not necessary to decrease at every local point in time, reflecting the probabilistic formulation in Equation \eqref{energy parameterization of stability}. Instead, even if $V$ leads to local increases, the long-term trajectory exhibits exponential decay (reflecting the core idea of Almost Lyapunov function). As long as any violation is confined to regions with sufficiently weak influence (i.e., small $\epsilon$), the net behavior of system dynamics guarantees global safety and stability. This suggests a robust strategy is to prioritize global performance while allowing a small chance of local Lie derivative violations. On the other hand,  the trajectories generated under the diffusion-sampled policy can be reused to train and improve the neural CLBF $V$, progressively reducing the measure of the violation region. Theorem \ref{Theorem: Safety and Stability with Almost Sure Guarantees} holds when the violation region $\Omega$ has sufficiently small volume. This establishes an end-to-end guarantee: the data-driven learning of $V$ ensures the small-volume condition required by the theorem (see detailed proof in Appendix~\ref{Appendix: Theoretical Guarantees}). 


\section{Experiment}
In this section, we comprehensively evaluate the performance of our model across various nonlinear dynamical systems, covering both tracking and control tasks. In the first subsection, we compare our algorithm, $S^2$Diff, with competitive baseline methods and interpret how our design contributes to performance improvements. In the second subsection, we investigate the impact of key hyperparameters, such as sampling horizons and temperature factors, on control performance.

\subsection{Control Performance Comparison}
\label{subsec: Control Performance Comparison}

\begin{figure}[t!]
    \centering
    \includegraphics[width=0.99\textwidth]{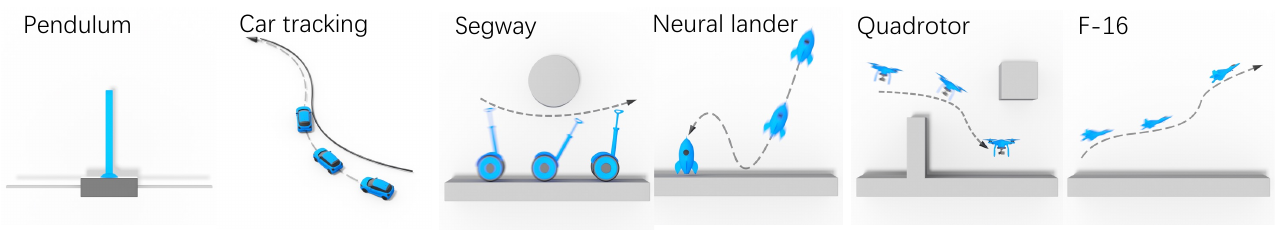}  
    \caption{Benchmark control tasks for safety and stability.}
    \label{fig:example}  
    \vspace{-1em}
\end{figure}

\paragraph{Tasks.} (1) Inverted pendulum ($n=2$): stabilize the pendulum in the upright position by moving the base left or right. (2) Car trajectory tracking ($n=5, 7$): make the car follow a desired trajectory by controlling its speed and steering angle. (3) Segway ($n = 4$): keep the two-wheeled robot balanced while moving forward or backward as commanded. (4) Neural lander $(n = 6)$: land a aircraft smoothly and accurately on the target in space by adjusting thrusters and orientation. (5) 2D and 3D Quadrotor ($n = 6, 9$): stabilize the quadrotor in hover or level flight with complex obstacles.  (6) F-16 ($n = 16$): control a F-16 aircraft to stabilize key variables such as altitude, velocity, angle of attack, elevator position and other critical states. The first five systems are control-affine, while the F-16 is non-control-affine. See illustrations in Figure \ref{fig:example}. 

\paragraph{Baseline Algorithms.} (1) Robust Control Lyapunov Barrier Function based on QP (rCLBF-QP) \citep{dawson2022safe}: a certificate-based control method for control-affine dynamics, utilizing a step-wise greedy policy derived from a learned quadratic CLBF. (2) Model Predictive Control (MPC) \citep{levine2018handbook, lofberg2012automatic}: a robust MPC framework designed for safe control tasks. (3) Model-Based Diffusion (MBD) \citep{pan2025model}: a model-based diffusion planning approach that leverages known dynamics with constraints.

\paragraph{Evaluation Metric.}We evaluate the control policies from three aspects:
(1) Safety rate—the percentage of trajectories that remain safe, computed over 20 initial states;
(2) Stability—the distance between the fixed-horizon terminal state and the equilibrium state;
(3) Efficiency—the inference time required to generate control policies.

\begin{table}[t]
\small
\centering
\footnotesize
\renewcommand{\arraystretch}{0.5}
\caption{Comparison of controller performance in various control environments. The first three tasks are for stability, and other tasks need to consider both safety and stability. $-$ means not applicable.}
\begin{tabular}{llccc}
\toprule
Task & Algorithm & Safety rate & $\|x - x_\star\|$ & Eval. time (ms) \\
\midrule
\multirow{5}{*}{Inverted Pendulum } 
  & rCLBF-QP & -- &  $0.02\pm0.01$ & $\boldsymbol{4.4\pm0.2}$ \\
  & MPC   & -- & $\boldsymbol{0.01\pm0.01}$ & $109.3\pm0.8$ \\
  & MBD   & -- & $0.12\pm0.05$ & $49.5\pm0.3$ \\
  & $S^2$Diff & -- & \cellcolor[gray]{.9} $\boldsymbol{0.01\pm0.01}$ & $23.5\pm0.1$ \\
\midrule
\multirow{5}{*}{Car (Kin.)} 
  & rCLBF-QP & -- &  $0.75\pm0.37$  & $\boldsymbol{10.2\pm0.3}$ \\
  & MPC   & -- & $1.51\pm0.86$ & $194.6\pm1.6$ \\
  & MBD   & -- & $1.57\pm0.91$ & $88.5\pm1.1$ \\
  & $S^2$Diff & -- & \cellcolor[gray]{.9} $\boldsymbol{0.61\pm0.12}$ & $38.2 \pm 0.5$ \\
\midrule
\multirow{5}{*}{Car (Slip)} 
  & rCLBF-QP & -- & $1.03\pm0.34$ & $\boldsymbol{9.6\pm0.2}$ \\
  & MPC   & -- & \cellcolor[gray]{.9} $\boldsymbol{0.15\pm0.09}$ & $336.5\pm2.7$ \\
  & MBD & -- & $1.84\pm0.62$ & 69.2 $\pm$ 0.6 \\
  & $S^2$Diff & -- &$0.51\pm0.18$ & 34.7 $\pm$ 0.9 \\
\midrule
\multirow{5}{*}{Segway} 
  & rCLBF-QP & 90\% & \cellcolor[gray]{.9} $\boldsymbol{0.11\pm0.13}$ & $\boldsymbol{5.2\pm0.1}$ \\
  & MPC   & 20\% & $1.39\pm0.55$ & $254.6\pm4.3$ \\
  & MBD & 85\% & $1.58\pm0.79$ & $205.7\pm0.8$ \\
  & $S^2$Diff & \cellcolor[gray]{.9} $\boldsymbol{100\%}$ &  $0.23\pm0.09$ & $21.8\pm0.3$ \\
\midrule
\multirow{5}{*}{Neural Lander} 
  & rCLBF-QP & 55\% & 0.13 $\pm$ 0.06  & $\boldsymbol{12.7\pm0.2}$ \\
  & MPC   & 100\% & 0.21 $\pm$ 0.09 & $247.2\pm3.6$ \\
  & MBD & 35\% & 0.32 $\pm$ 0.19 & $165.9\pm0.4$ \\
  & $S^2$Diff & \cellcolor[gray]{.9} $\boldsymbol{100\%}$ & \cellcolor[gray]{.9} $\boldsymbol{0.06\pm0.02}$ & $35.4\pm0.7$ \\
\midrule
\multirow{5}{*}{2D Quad} 
  & rCLBF-QP & 70\% & $0.19\pm0.05$ & $\boldsymbol{18.6\pm0.7}$ \\
  & MPC   & 45\% & $0.15\pm0.03$ & $279.6\pm1.3$ \\
  & MBD & 75\% & $0.47\pm0.29$ & $112.3\pm0.2$ \\
  & $S^2$Diff & \cellcolor[gray]{.9} $\boldsymbol{95\%}$ & \cellcolor[gray]{.9} $\boldsymbol{0.11\pm0.03}$ & $82.4\pm0.3$ \\
\midrule
\multirow{5}{*}{3D Quad} 
  & rCLBF-QP & 100\% & $0.46\pm0.12$ & $\boldsymbol{9.7\pm0.4}$ \\
  & MPC   & 100\% & $0.09\pm0.02$ & $324.9\pm2.7$\\
  & MBD & 100\% & $0.78\pm0.38$ & $168.0\pm1.6$ \\
  & $S^2$Diff & \cellcolor[gray]{.9} 100\% & \cellcolor[gray]{.9} $\boldsymbol{0.05\pm0.02}$ & $83.5\pm0.8$ \\
\toprule
\multirow{5}{*}{Average Performance above} &  rCLBF-QP & $78.75\%$ & $0.384$ & $\boldsymbol{10.06}$ \\
& MPC & $66.25\%$ & $0.501$ & $249.53$ \\
& MBD & $73.75\%$ & $0.954$ & $122.73$ \\
& S$^2$Diff & \cellcolor[gray]{.9} $\boldsymbol{98.75}\%$ & \cellcolor[gray]{.9} $\boldsymbol{0.226}$ & $45.64$ \\
\toprule
\toprule

\multirow{5}{*}{F-16 (non-control-affine)} 
  & rCLBF-QP & -- & -- & -- \\
  & MPC   & -- & -- & -- \\

  & MBD & $\boldsymbol{100\%}$ & $68.34 \pm 32.77$  & $611.3\pm13.7$ \\
  & $S^2$Diff & \cellcolor[gray]{.9} $\boldsymbol{100\%}$ & \cellcolor[gray]{.9} $\boldsymbol{\mathbf{47.61 \pm 12.45}}$ & $\boldsymbol{257.2\pm5.5}$ \\
\bottomrule
\end{tabular} \label{Table:control comparison}
\end{table}


\begin{figure}
    \centering
    \includegraphics[width=1.0\linewidth]{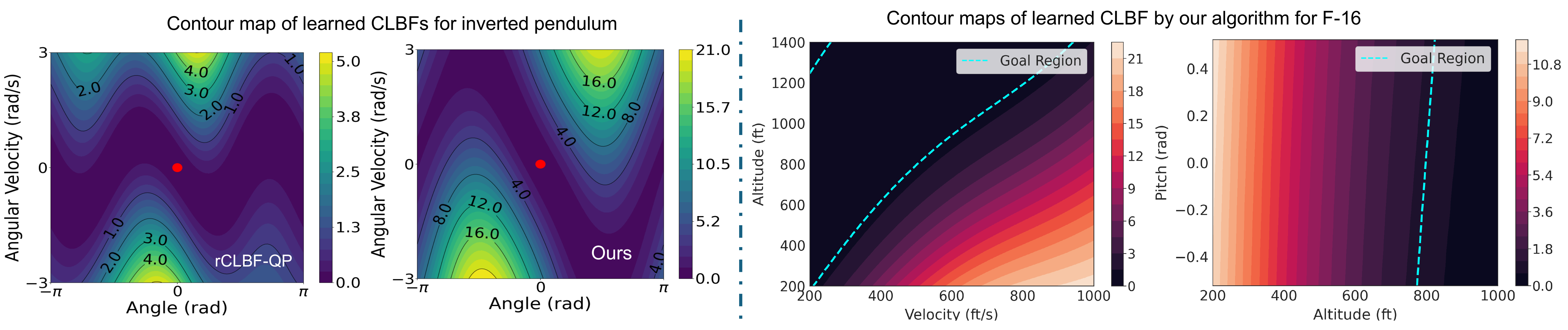}
    \caption{\textbf{Left:} CLBFs learned by Gradient-based method (left-1) vs. Diffusion Sampling (left-2) for inverted pendulum . \textbf{Right:} Contour maps along different axes of the CLBF learned by $S^2$Diff for the high-dimensional, non-control-affine F-16 with non-convex constraints. The smooth level sets across 2D projections highlight the CLBF’s expressiveness and its ability to capture complex, constrained dynamics.}
    \label{comparison_value_pendulumn}
\end{figure}


\begin{figure}
    \centering
    \includegraphics[width=0.99\linewidth]{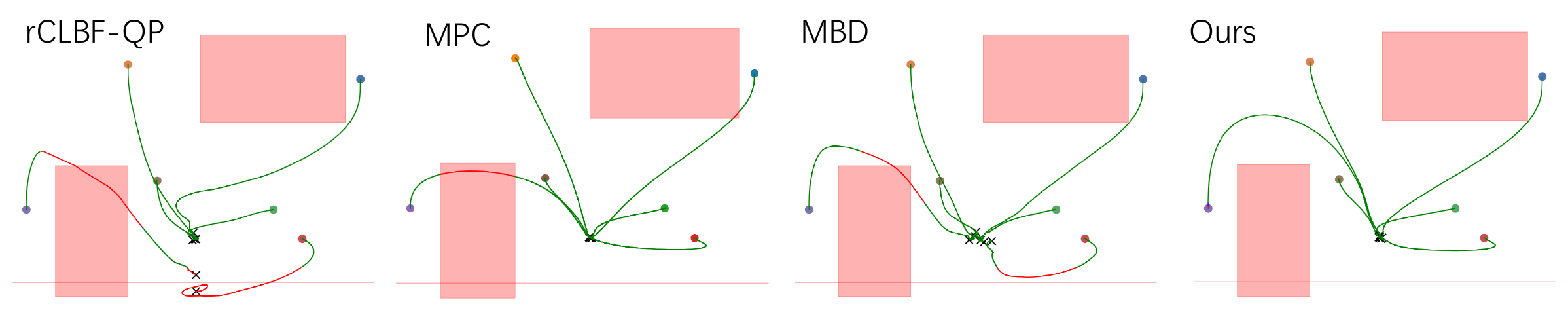}
    \caption{Control trajectories of a 2D quadrotor with four methods including ours ($S^2$Diff). The $\circ$ and $\times$ mark the start and end points, respectively. Green lines denote safe states; red lines indicate constraint violations. $S^2$Diff achieves higher safety and stability, effectively handling non-convex constraints where baselines struggle.}
    \label{quad2d_traj_combo}
\end{figure}

\paragraph{Result Analysis.} We present an analysis of our results by addressing four key questions that explore the contributions of diffusion sampling and CLBFs in our framework:

(1) \textit{Does diffusion sampling help in learning certificate functions?}
Yes—diffusion sampling significantly improves the learning of certificate functions. In Figure~\ref{comparison_value_pendulumn} (left), we compare contour maps of the CLBF for the inverted pendulum obtained via diffusion sampling and a traditional step-wise QP-based approach. The diffusion-based method results in a noticeably larger contraction region, indicating stronger stability guarantees. Moreover, this advantage generalizes to more complex, high-dimensional systems. For example, in Figure~\ref{comparison_value_pendulumn} (right), we visualize contour slices of the CLBF learned for the F-16 system. Unlike gradient-based methods reliant on explicit linearization, $S^2$Diff can directly handle general nonlinear and \textbf{non-control-affine dynamics}. The resulting non-quadratic level sets highlight the flexibility and expressiveness of the neural network-based CLBF.

(2)  \textit{Do diffusion-sampled policies outperform those obtained through gradient-based methods?}
Empirical results indicate that diffusion-sampled policies generally achieve superior performance. As shown in Table~\ref{Table:control comparison}, they outperform both MBD and rCLBF-QP across multiple metrics. This is further illustrated in Figure~\ref{quad2d_traj_combo}, which depicts trajectories for a 2D quadrotor navigating around non-convex obstacles. While rCLBF-QP performs reasonably well near the goal, the QP-based controller often struggles when the system starts far from the target, due to its myopic nature. In contrast, diffusion-sampled policies produce smoother, more globally consistent trajectories and maintain a significantly higher safety rate throughout the task.

(3)  \textit{Are diffusion-sampled policies further improved under the guidance of CLBFs?}
Yes—guiding the diffusion process with a learned CLBF enhances both safety and robustness. Unguided model-based diffusion with fixed penalties offer only local repulsion near unsafe boundaries, leaving most of the space uninformative; as a result, the sampler exhibits low sampling efficiency due to largely unguided exploration, leading to high variance and unsafe outcomes (see Table~\ref{Table:control comparison}, Figure~\ref{f16_3d_comparison}, \ref{fig:F16_2d}). In contrast, the CLBF shapes a global energy landscape that funnels trajectories toward safe, goal-directed regions, yielding more stable behavior, consistently safer policies, and lower inference cost.

(4)  \textit{Is the Almost Lyapunov theory reflected in our empirical results?}
Our empirical evaluations provide strong support for the theoretical guarantees in Theorem~\ref{Theorem: Safety and Stability with Almost Sure Guarantees}, showing that the violation rate is empirically small (see Table~\ref{tab:violation_fraction}). In the F-16 control task, for instance, we monitor violations of the Lie derivative condition and find that they occur with a frequency below $1.5\%$, as illustrated in Figures~\ref{f16_v_change} and \ref{2d_quad_V}. These low violation rates, together with the near-monotonic decrease of the CLBF scalar values, indicate that the observed system behavior is consistent with the Almost Lyapunov stability property in Equation~\eqref{eq:decay}. Additional quantitative results are provided in Appendix~\ref{Appendix: more quantitative result}.

\begin{figure}[h]
    \centering
    \begin{subfigure}[b]{0.44\textwidth}
        \includegraphics[width=\textwidth]{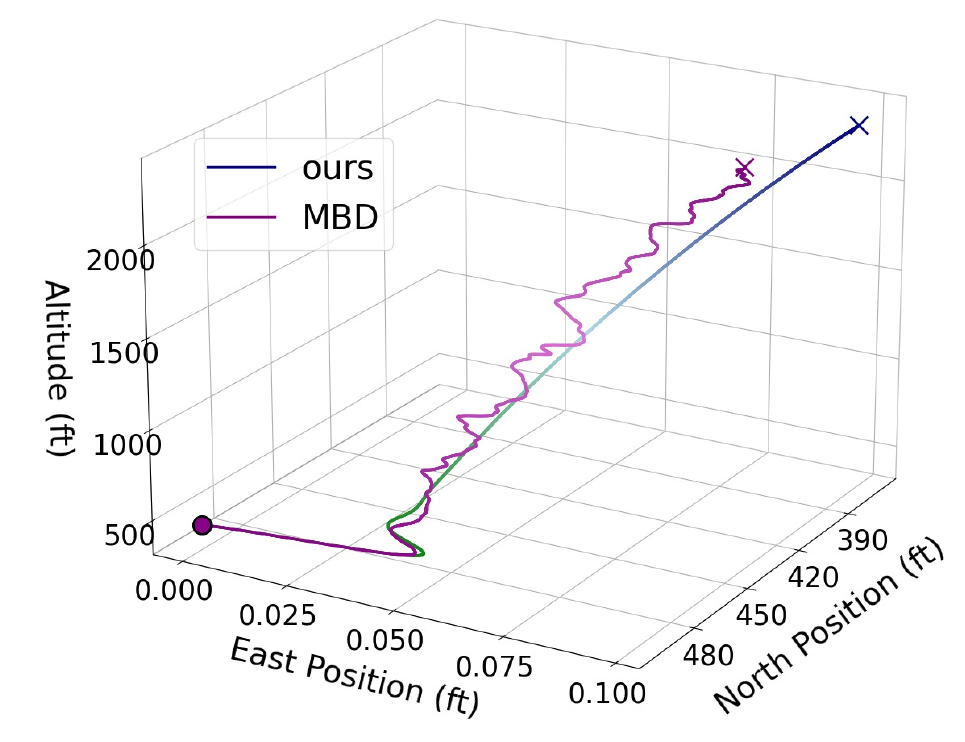}
        \caption{Comparison of control performance between MBD and $S^2$Diff on the F-16 system.} 
        \label{f16_3d_comparison}
    \end{subfigure}
    \hfill
    \begin{subfigure}[b]{0.46\textwidth}
        \includegraphics[width=\textwidth]{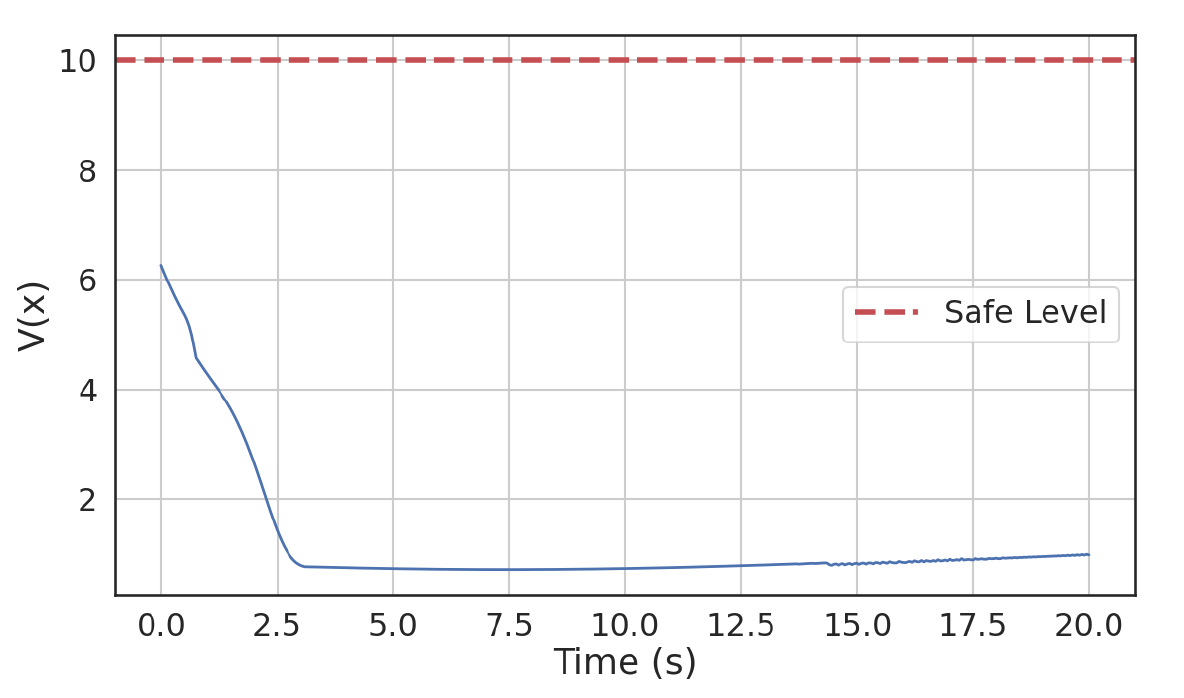}
        \caption{Evolution of the CLBF scalar value under the $S^2$Diff control policy, indicating Almost Lyapunov theory.} 
        \label{f16_v_change}
    \end{subfigure}
    \caption{Comparison of control performance and Lyapunov behavior on the F-16 system. (a) $S^2$Diff achieves improved tracking compared to MBD. (b) The scalar value of the CLBF under $S^2$Diff shows a nearly monotonic decrease, demonstrating Almost Lyapunov stability.}
    \label{fig:four_images_f16}
    \vspace{-1em}
\end{figure}

\begin{table}[h]
\centering
\caption{Estimated violation rate $\frac{\text{Vol}(\Omega)}{\text{Vol}(\mathcal{X})}$ on selected benchmark tasks. 
This fraction quantifies the relative size of the violation region within the compact state set $\mathcal{X}$ (see Equation~\eqref{Probablistic bound}), 
and is empirically estimated via uniform sampling over $\mathcal{X}$ to assess the practical violation rate.}
\label{tab:violation_fraction}
\begin{tabular}{lccccc}
\toprule
Task & Segway & Neural Lander & 2D Quad & 3D Quad & F-16 \\
\midrule
Violation rate & $0.5\%$ & $1.1\%$ & $1.6\%$ & $1.3\%$ & $2.4\%$ \\
\bottomrule
\end{tabular}
\end{table}

\subsection{Ablation Study}
To assess the sensitivity of diffusion sampling and loss function hyperparameters, we conduct a series of ablation experiments in the neural lander environment. More ablation studies in Appendix \ref{Appendix: more ablation study}.
\paragraph{Violation Temperature.} To evaluate robustness to the stability temperature $\gamma_2$, we train our CLBF with $\gamma_2 \in \{0.5, 0.2, 0.1, 0.05, 0.02, 0.01\}$, fixing $\gamma_1 = 0.5$ (selection from grid search). As shown in Table~\ref{temperature_ablation}, both high and low $\gamma_2$ degrade performance: high values overly relax stability, increasing violation of stability and safety; low values impose strict constraints, leading to suboptimal solutions. An intermediate value $\gamma_2 = 0.1$ offers the best balance between stability and control performance.


\begin{table}[h!]
\centering
\vspace{-1em}
\small
\renewcommand{\arraystretch}{0.5}
\caption{Control performance of $S^2$Diff across different temperatures for stability term.}
\begin{tabular}{lcccccc}
\toprule
\textbf{Metric} & 0.5 & 0.2 & 0.1 & 0.05 & 0.02 & 0.01 \\
\midrule
Safety rate & 35\% & 75\% & 100\% & 100\% & 100\% & 100\% \\
$\|x - x_\star\|$ & 
$0.18 \pm 0.08$ & 
$0.09 \pm 0.06$ & 
$\boldsymbol{0.06 \pm 0.02}$ & 
$0.08 \pm 0.03$ & 
$0.11 \pm 0.03$ & 
$0.12 \pm 0.02$ \\
\bottomrule
\end{tabular}
\label{temperature_ablation}
\vspace{-1em}
\end{table}
 
\paragraph{Hyperparameter in Loss Function.} We study the effect of Lie derivative formulations in the loss~\eqref{loss function}, comparing automatic differentiation ($\alpha_1$) and discretized approximations ($\alpha_2$). Here, $\alpha_1$ and $\alpha_2$ weight the penalties from automatic (auto.) and discretized (dis.) Lie derivatives, respectively. Table~\ref{lie_derivative_ablation} shows that using only the discretized term ($\alpha_1 = 0$, $\alpha_2 = 1$) degrades safety and convergence. Incorporating with balanced weights ($\alpha_1 = \alpha_2 = 1$), yields the best stability and accuracy.

\begin{table}[h!]
\small
\centering
\caption{Control performance of $S^2$Diff under various CLBFs with different $\alpha_1, \alpha_2$ configurations.}
\small 
\begin{tabular}{lcccccc}
\toprule
\textbf{Metric} & 
\makecell{$\alpha_1=0$\\$\alpha_2=0$} &
\makecell{$\alpha_1=1$\\$\alpha_2=0$} & 
\makecell{$\alpha_1=0$\\$\alpha_2=1$} & 
\makecell{$\alpha_1=1$\\$\alpha_2=0.5$} & 
\makecell{$\alpha_1=0.5$\\$\alpha_2=1$} &  
\makecell{$\alpha_1=1$\\$\alpha_2=1$} \\
& CBF & CLBF (auto.) & CLBF (dis.) & - & - &CLBF  \\
\midrule
Safety rate & 10\% & 100\% & 85\% & 100\% & 100\% & 100\% \\
$\|x - x_\star\|$ & $0.65\pm 0.14 $ & $0.15 \pm 0.07$ & $0.21 \pm 0.09$ & $0.09 \pm 0.03$ & $0.11 \pm 0.05$ & $\boldsymbol{0.06 \pm 0.02}$ \\
\bottomrule
\end{tabular}
\label{lie_derivative_ablation}
\end{table}


\section{Conclusion} In this work, we introduced Safe and Stable Diffusion ($S^2$Diff), a novel model-based control framework that demonstrates how diffusion sampling can be leveraged to learn certificate functions inspired by Almost Lyapunov theory. Conversely, we show that these learned certificate functions can effectively guide the diffusion process, resulting in safe and stable control policies. Through a probabilistic formulation, our approach avoids the limitations of traditional gradient-based methods, requiring neither relaxations nor control-affine assumptions. Furthermore, the framework is flexible and can be extended to more expressive neural certificate functions, offering a promising direction for safe learning-based control. A current limitation is the slower inference speed compared to QP methods, which may be addressed via policy distillation.

\bibliographystyle{plain}
\bibliography{ref}

\clearpage
\appendix
\section{Notation}
\begin{center}
\begin{tabular}{cc}
\toprule 
Notations & Meaning   \\ 
\midrule 
$c$ & safety level \\ 
$f$ & differentiable dynamics \\
$p$ & probability distribution \\
$q$ & cost function \\
$t$ & discrete time step \\
$u$ & control policy \\
$u^{\text{nominal}}$ & nominal control policy\\
$x$ & state \\ 
$x_\star$ & equilibrium state \\
$\alpha_1, \alpha_2$ &  tunable parameters in loss function \eqref{loss function} \\ 
$\sigma$ & activation function \\ 
$\gamma, \gamma_1, \gamma_2$ & temperature constants \\
$\lambda$ & dissipation rate \\
$\mathcal{R}$ & Rademacher complexity \\ 
$U$ & trajectory sequence $(x_{1:T}, u_{1:T})$ \\
$\mathcal{U}$ & compact control policy space \\ 
$V$ & control Lyapunov barrier function \\
$\mathcal{X}$ & compact state space \\
$\mathcal{X}_s$ & compact safe state space\\
$\mathbb{1}_{\{\cdot\}}$ & indicator function \\
$\mathcal{L}$ & Lie derivative \\
$Vol(\cdot)$ & volume measure function \\
$\Omega$ & violating set lying in the compact set $\mathcal{X}$ \\
$\| \Box \|_2$ & spectrum norm of operator \\
$\| \Box \|_{2,1}$ & $1-$norm of $2-$norm of the rows of matrices \\

\bottomrule 
\end{tabular}
\end{center}

\clearpage
\section{Preliminaries}
\begin{definition}[Safe and Stable Control Policy]
Given a differentiable dynamical system 
$\dot{x}_t = f(x_t, u_t)$
with a target goal \( x_\star \), a non-empty safe set \( \mathcal{X}_s \), and an avoid set \( \mathcal{X} \setminus \mathcal{X}_s \), a safe and stable control policy must satisfy the following properties:

\begin{itemize}
    \item \textbf{Stability:} The system state \( x_t \) asymptotically approaches \( x_\star \) within a small tolerance \( \epsilon \), i.e.,
    \[
    \limsup_{t\to \infty} \| x_t - x_\star \| \leq \epsilon.
    \]
    \item \textbf{Safety:} The system remains within the safe set at all times, i.e.,
    \[
    x_t \in \mathcal{X}_s, \quad \forall t \in \mathbb{N}.
    \]
\end{itemize}
\end{definition}

\begin{remark}
The stability ensures the reachability of the dynamical system asymptotically contracting to $x_\star$ while avoiding all unsafe states in $\mathcal{X} \setminus \mathcal{X}_{s}$. We make a mild assumption that a control policy always exists to satisfy both safety and stability requirements.
\end{remark}

\vspace{20pt}
\begin{definition}[Lie Derivative]
    Let \( V: \mathcal{X}\rightarrow \mathbb{R} \) be a continuously differentiable scalar function and let \( f: \mathcal{X} \times \mathcal{U} \rightarrow \mathbb{R}^n \) be a continuously differentiable vector field. The \emph{Lie derivative} of \( V \) along the vector field \( f \), denoted \( \mathcal{L}_f V \), is defined as:
    \[
        \mathcal{L}_f V(x) = \nabla V(x)^\top f(x, u)
    \]
    where \( \nabla V(x) \in \mathbb{R}^n \) is the gradient of \( V \) at point \( x \). This quantity represents the rate of change of \( V \) along the trajectories of the dynamical system \( \dot{x} = f(x, u) \).
\end{definition}

\vspace{20pt}
\begin{proposition} \label{proposition: lyapunov}
Let \( V : \mathcal{X} \to \mathbb{R} \) be a continuously differentiable function satisfying the following conditions \citep{artstein1983stabilization}:
\begin{enumerate}
    \item \textbf{Equilibrium:} \( V(x_\star) = 0 \),
    \item \textbf{Positivity:} \( V(x) > 0 \) for all \( x \in \mathcal{X} \setminus \{x_\star\} \),
    \item \textbf{Uniform dissipation:} There exists \( \lambda > 0 \) such that
    \[
    \inf_{u \in \mathcal{U}} \mathcal{L}_f V(x) + \lambda V(x) \leq 0, \quad \forall x \in \mathcal{X} \setminus \{x_\star\},
    \]
    where \( \mathcal{L}_f V(x)\) is the Lie derivative of \( V \) along the system dynamics \( \dot{x} = f(x, u) \).
\end{enumerate}

Then the equilibrium point \( x_\star \) is exponentially stable under a suitable control policy \( u \in \mathcal{U} \), in the sense that:
\[
V(x_t) \leq V(x_0) e^{-\lambda t}, \quad \forall t \geq 0.
\]
Moreover, if \( V(x) \) is radially unbounded and satisfies bounds
\[
\alpha_1(\|x - x_\star\|) \leq V(x) \leq \alpha_2(\|x - x_\star\|),
\]
for some class \( \mathcal{K}_\infty \) functions \( \alpha_1, \alpha_2 \), then:
\[
\|x_t - x_\star\| \leq \beta(\|x_0 - x_\star\|, t),
\]
for some class \( \mathcal{K}\mathcal{L} \) function \( \beta \), and in particular:
\[
\|x_t - x_\star\| \leq K e^{-\lambda t}, \quad \text{for some } K > 0.
\]
\end{proposition}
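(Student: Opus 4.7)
The plan is a classical Lyapunov argument organized in two stages: first establish the exponential decay of the scalar $V(x_t)$ along a suitably chosen closed-loop trajectory, then lift that decay to a bound on $\|x_t - x_\star\|$ using the $\mathcal{K}_\infty$ comparison functions. The only subtlety is that the uniform dissipation condition is stated pointwise via an infimum over $u$, so I first need to produce an actual control signal that realizes (or nearly realizes) the infimum along the trajectory.

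\textbf{Step 1: selecting the control.} Because $\mathcal{U}$ is compact and $(x,u)\mapsto \mathcal{L}_f V(x,u)+\lambda V(x) = \nabla V(x)^\top f(x,u)+\lambda V(x)$ is continuous, the infimum in condition~(3) is attained for each $x\neq x_\star$. I would invoke a measurable selection theorem (e.g. Kuratowski--Ryll-Nardzewski, or Filippov's lemma on implicit differential equations) to choose a measurable feedback $u^\star:\mathcal{X}\setminus\{x_\star\}\to\mathcal{U}$ such that $\mathcal{L}_f V(x,u^\star(x))\leq -\lambda V(x)$. Under this feedback the closed-loop system $\dot x = f(x,u^\star(x))$ admits a Carath\'eodory (or Filippov) solution on $[0,\infty)$, and on any such solution $V$ is absolutely continuous.

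\textbf{Step 2: scalar exponential decay.} Define $\phi(t):=V(x_t)$. By the chain rule and the selected bound,
\begin{equation*}
\dot\phi(t) \;=\; \nabla V(x_t)^\top f(x_t,u^\star(x_t)) \;=\; \mathcal{L}_f V(x_t,u^\star(x_t)) \;\leq\; -\lambda\,\phi(t).
\end{equation*}
The standard comparison lemma (a one-line Gr\"onwall computation applied to $e^{\lambda t}\phi(t)$) then yields $\phi(t)\leq \phi(0)e^{-\lambda t}$, i.e. $V(x_t)\leq V(x_0)e^{-\lambda t}$, which is the first claim. Note that $V$ is nonnegative by condition~(2), so positivity is preserved and the trajectory cannot escape $\mathcal{X}\setminus\{x_\star\}$ spuriously.

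\textbf{Step 3: lifting to a state-space bound.} Using the sandwich $\alpha_1(\|x-x_\star\|)\leq V(x)\leq \alpha_2(\|x-x_\star\|)$ and the previous step,
\begin{equation*}
\alpha_1(\|x_t-x_\star\|) \;\leq\; V(x_t) \;\leq\; V(x_0)e^{-\lambda t} \;\leq\; \alpha_2(\|x_0-x_\star\|)\,e^{-\lambda t}.
\end{equation*}
Since $\alpha_1\in\mathcal{K}_\infty$ is continuous and strictly increasing with $\alpha_1^{-1}\in\mathcal{K}_\infty$, inverting gives $\|x_t-x_\star\|\leq \alpha_1^{-1}\bigl(\alpha_2(\|x_0-x_\star\|)e^{-\lambda t}\bigr)=:\beta(\|x_0-x_\star\|,t)$, and a direct check shows $\beta\in\mathcal{KL}$. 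To extract the explicit form $\|x_t-x_\star\|\leq K e^{-\lambda t}$, I would specialize $\alpha_1,\alpha_2$ to matching polynomial growth (the quadratic case $\alpha_i(r)=c_i r^2$ is typical); then $\|x_t-x_\star\|\leq \sqrt{c_2/c_1}\,\|x_0-x_\star\|\,e^{-\lambda t/2}$, and setting $K=\sqrt{c_2/c_1}\,\|x_0-x_\star\|$ (absorbing a factor of two into $\lambda$ if desired) gives the stated bound.

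\textbf{Anticipated obstacle.} The only non-mechanical step is the construction of the feedback $u^\star$ and the accompanying existence of closed-loop solutions; the pointwise infimum condition does not automatically produce a continuous control, so some care with measurable selection and Filippov/Carath\'eodory solutions is needed. Everything downstream (chain rule, Gr\"onwall, inversion of $\alpha_1$) is routine once that selection is in place. A minor secondary issue is that the exponential rate in $\|x_t-x_\star\|$ equals the rate in $V(x_t)$ only when $\alpha_1$ and $\alpha_2$ have matching orders; otherwise one only obtains a generic $\mathcal{KL}$ bound.
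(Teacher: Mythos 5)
Your proof is correct and follows essentially the same route the paper relies on: the paper treats this as the standard result from Artstein, obtained by applying the comparison/Gr\"onwall argument (its Lemma~\ref{Comparison Lemma} with $C=0$) and then inverting the class-$\mathcal{K}_\infty$ sandwich. Your extra care about measurable selection of a feedback realizing the infimum, and your caveat that the bound $\|x_t-x_\star\|\leq K e^{-\lambda t}$ only holds with this exact rate when $\alpha_1,\alpha_2$ have matching polynomial order (e.g.\ rate $\lambda/2$ in the quadratic case), are sound refinements of details the paper leaves implicit.
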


While the ideal Lyapunov function highlights that uniform dissipation ensures exponential stability, learning such a perfect function from finite data is theoretically impossible. Rather than focusing on these strict conditions, we turn to the framework of Almost Lyapunov theory. 

\begin{lemma}[Monte-Carlo Estimation of Score]
In the Diffusion process defined as follows. The marginal distribution satisfies:
\begin{equation}
    p(U^0) = \int p(U^N) \prod_{i=1}^{N} p(U^{i-1} \mid U^i) \, dU^{1:N}.
\end{equation}
And the forward process:
\begin{equation}
    p(U^{i} \mid U^{i-1}) \sim \mathcal{N}(\sqrt{\alpha_i} U^{i-1}, (1 - \alpha_i) I), \label{one-step forward}
\end{equation}
where $\alpha_i$ is the scaling factor and \( I \) is the identity matrix.
the score function of corrupted samples can be rewritten as 
\begin{align}
    \nabla_{U^i} \log p(U^i) = \mathbb{E}_{U^0 \sim p(U^0 \mid U^i)} \left[ \frac{1}{1 - \bar{\alpha}_i} (U^i - \sqrt{\bar{\alpha}_i} U^0) \right]. 
\end{align}
\label{lemma:mscf}
\end{lemma}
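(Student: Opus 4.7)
The plan is to prove the identity by combining the log-derivative trick with the closed-form Gaussian density of the forward kernel $p(U^i \mid U^0)$, which is the standard derivation of Tweedie's formula in the score-based diffusion literature. First I would establish that iterating the one-step forward process in Equation \eqref{one-step forward} yields the marginal forward kernel $p(U^i \mid U^0) \sim \mathcal{N}\bigl(\sqrt{\bar{\alpha}_i}\, U^0,\, (1-\bar{\alpha}_i)I\bigr)$, with $\bar{\alpha}_i = \prod_{k=1}^i \alpha_k$. This follows from the reparameterization $U^i = \sqrt{\alpha_i}\, U^{i-1} + \sqrt{1-\alpha_i}\,\epsilon_i$ with independent standard Gaussian noises and an induction that tracks the resulting mean and variance.

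Next I would apply the log-derivative trick to the marginal $p(U^i) = \int p(U^i \mid U^0)\, p(U^0)\, dU^0$. Under the standard regularity assumptions for Gaussian densities, one may interchange gradient and integral to obtain
\begin{equation*}
\nabla_{U^i} p(U^i) \;=\; \int \nabla_{U^i} p(U^i \mid U^0)\, p(U^0)\, dU^0.
\end{equation*}
Because $p(U^i \mid U^0)$ is Gaussian with mean $\sqrt{\bar{\alpha}_i}\, U^0$ and covariance $(1-\bar{\alpha}_i)I$, a direct computation gives
\begin{equation*}
\nabla_{U^i} \log p(U^i \mid U^0) \;=\; -\frac{1}{1-\bar{\alpha}_i}\bigl(U^i - \sqrt{\bar{\alpha}_i}\, U^0\bigr).
\end{equation*}

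Finally I would divide $\nabla_{U^i} p(U^i)$ by $p(U^i)$ and invoke Bayes' rule $p(U^0 \mid U^i) = p(U^i \mid U^0)\, p(U^0)/p(U^i)$ to convert the integral over $p(U^0)$ into an expectation over the posterior $p(U^0 \mid U^i)$. This yields
\begin{equation*}
\nabla_{U^i} \log p(U^i) \;=\; \mathbb{E}_{U^0 \sim p(U^0 \mid U^i)}\!\left[-\frac{1}{1-\bar{\alpha}_i}\bigl(U^i - \sqrt{\bar{\alpha}_i}\, U^0\bigr)\right],
\end{equation*}
which matches the stated expression (up to the sign convention already used in Equation \eqref{score function}).

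There is essentially no conceptual obstacle; the main care points are bookkeeping the variances under the iterated Gaussian convolution and justifying the differentiation under the integral sign. The latter is standard since the Gaussian kernel and its gradient are dominated by integrable envelopes uniformly in $U^i$ on compact sets, so the dominated convergence theorem applies. Once these two routine ingredients are in place, the identity follows immediately from Tweedie's formula.
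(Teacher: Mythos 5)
Your proposal follows essentially the same route as the paper's own proof: apply the log-derivative trick to $\nabla_{U^i} p(U^i)/p(U^i)$, exchange gradient and integral, use Bayes' rule to convert to an expectation over the posterior $p(U^0\mid U^i)$, and then substitute the Gaussian form of $p(U^i\mid U^0)$. You also correctly flag the missing minus sign in the lemma's stated conclusion (the Gaussian log-gradient is $-\tfrac{1}{1-\bar{\alpha}_i}(U^i-\sqrt{\bar{\alpha}_i}U^0)$, consistent with Equation~\eqref{score function} in the main text), a sign that the appendix proof itself drops in its final line.
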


\begin{proof}
    We leverage the estimated  for Diffusion Sampling, we start with:
\begin{equation}
\begin{split}
    \nabla_{U^i} \log p(U^i) = \frac{\nabla_{U^i} p(U^i)}{p(U^i)}
    &= \frac{\nabla_{U^i} \int p(U^i \mid U^0) p(U^0) \, dU^0}{p(U^i)}  \\
    &= \int \frac{\nabla_{U^i} p(U^i \mid U^0) p(U^0)}{p(U^i)} \, dU^0 \\
    &= \int \frac{\nabla_{U^i} p(U^i \mid U^0)}{p(U^i \mid U^0)} \frac{p(U^i \mid U^0) p(U^0)}{p(U^i)} \, dU^0 \\
    &= \int \nabla_{U^i} \log p(U^i \mid U^0) \, p(U^0 \mid U^i) \, dU^0 \\
    &= \mathbb{E}_{U^0 \sim p(U^0 \mid U^i)} \big[ \nabla_{U^i} \log p(U^i \mid U^0) \big].
\end{split}
\end{equation}

Using the known form of \( p(U^i \mid U^0) \) from Equation \eqref{Eq:N-step conditional}, the score function becomes:
\begin{equation}
\begin{split}
    \nabla_{U^i} \log p(U^i)
    &= \mathbb{E}_{U^0 \sim p(U^0 \mid U^i)} \left[ \frac{1}{1 - \bar{\alpha}_i} (U^i - \sqrt{\bar{\alpha}_i} U^0) \right].  \label{score function_2}
\end{split}
\end{equation}
\end{proof}



\clearpage
\section{Theoretical Guarantees} \label{Appendix: Theoretical Guarantees}

This section is divided into two parts:
\begin{itemize} 
\item Appendix~\ref{sample complexity analysis}: we give a learning‐theoretic analysis showing that one cannot, in general, learn a neural CLBF that enforces uniform dissipation everywhere on a bounded domain.  Instead, with sufficiently many samples, the volume (Lebesgue measure) of the “dissipation‐violating” region can be driven arbitrarily close to zero.
\item Appendix~\ref{appendix:almost_lyapunov}: we further prove that—even when uniform dissipation fails—our diffusion‐sampled control policy still achieves almost‐sure stability, provided the volume of the dissipation‐violating region is sufficiently small.
\end{itemize}

\subsection{Convergence Analysis of Neural CLBFs} \label{sample complexity analysis}
To establish an error bound for the neural CLBF, we consider the probability—under the uniform distribution over a compact set \( \mathcal{X} \subset \mathbb{R}^n \)—that the Lyapunov dissipation condition is violated. Specifically, we define the violation probability as
\begin{equation}
\begin{split}
    p_{\mathcal{X}}(\mathcal{L}_f V(x) + \lambda V(x) > 0) & = \mathbb{E}_{x \sim \text{Unif}(\mathcal{X})} \left[ \mathbb{1}_{\{\mathcal{L}_f V(x) + \lambda V(x) > 0\}} \right] \\
    &= \frac{Vol\left( \left\{ x \in \mathcal{X} \mid \mathcal{L}_f V(x) + \lambda V(x) > 0 \right\} \right)}{Vol(\mathcal{X})}, \label{Probablistic bound}
\end{split}
\end{equation}
where \( Vol(\cdot) \) denotes the volume measure (e.g., Lebesgue measure in probability theory) in \( \mathbb{R}^n \). Our goal is to minimize the violation region's volume. This expression quantifies the fraction of the domain \( \mathcal{X} \) over which the Lie derivative condition fails to hold. It follows that a small expectation corresponds to a small violation region. Throughout we assume the dataset $\mathcal{D}$ is drawn i.i.d. from the uniform distribution over $\mathcal{X}$ ; therefore training and test distributions coincide.

\paragraph{Regularity Assumptions.}
Before proceeding with the formal analysis, we introduce a regularity assumption on the candidate CLBF \( V \). Specifically, we assume that \( V \in \mathcal{V} \), where \( \mathcal{V} \subset C^2(\mathbb{R}^n, \mathbb{R})\) denotes the class of continuously differentiable functions that are uniformly bounded in magnitude by a constant \( B > 0 \); that is $\|f\|_\infty, \|V\|_\infty \leq B \quad \text{for all } x \in \mathcal{X}$, 
where \( \mathcal{X} \subset \mathbb{R}^n \) is compact. In addition, we assume that the policy \( u(x) \) is Lipschitz continuous with respect to the state \( x \), i.e., for all \( x_1, x_2 \in \mathcal{X} \), $\|u(x_1) - u(x_2)\| \leq L \|x_1 - x_2\|$ for some constant \( L > 0 \).

The following lemma provides an upper bound on the probability of constraint violation of uniform dissipation with the learned $V$ in terms of the model complexity and the number of training samples.
\begin{lemma} \label{Rademacher complexity of function class}
Fix a $\delta \in (0,1)$. Assume that the function $V \in \mathcal{V}$ is bounded by a constant $B$ in a compact set. Suppose that the minimization of Equation \eqref{Probablistic bound} is feasible and let $V$ denote a solution. The following statement holds with probability at least $1-\delta$ over the randomness of $x_1, \dots, x_m$ drawn i.i.d. from dataset $\mathcal{D}$ :
\begin{equation}
    p_{x \sim \mathcal{D}}(\mathcal{L}_fV(x) + \lambda V(x) > - \eta) \leq M \bigg( \frac{\log^3 m}{\eta^2} \mathcal{R}^2_m(\mathcal{V}) + \frac{2\log(\log(4B/\eta)/\delta)}{m} \bigg), 
\end{equation}
where $\mathcal{R}_m(\mathcal{V}) \triangleq \sup_{x_1, \cdots, x_m \in \mathcal{D}} \mathbb{E}_{\epsilon \sim \text{Unif}(\{ \pm1 \}^m)} \bigg[\sup_{V \in \mathcal{V}} \frac{1}{m} \big| \sum^m_{i=1} \epsilon_i (\mathcal{L}_fV(x_i) + \lambda V(x_i)) \big| \bigg]$ is the Rademacher complexity of the function class $\mathcal{V}$ and $M$ is a universal constant. Also, the positive constant $\eta$ is a margin parameter. 
\end{lemma}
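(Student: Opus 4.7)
The bound has the shape of a margin-based generalization inequality with a ``fast rate'' factor $\mathcal{R}_m^2(\mathcal{V})/\eta^2$ together with a $\log(\log(4B/\eta)/\delta)$ union-bound overhead, so the plan is to pattern the argument after the smooth/self-bounding-loss analyses (e.g.\ Srebro–Sridharan–Tewari style) applied to the composite random variable $Z_V(x) \triangleq \mathcal{L}_f V(x) + \lambda V(x)$. First, I would note that $Z_V$ is bounded: by the assumed uniform bound $\|V\|_\infty\le B$, the Lipschitz continuity of the policy, and $\|f\|_\infty\le B$, one obtains a uniform bound $|Z_V(x)|\le 2B$, which will feed both the scale in the contraction step and the range of the $\eta$ union bound.

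Second, for a fixed margin $\eta>0$ I would introduce a ramp surrogate $\ell_\eta:\mathbb{R}\to[0,1]$ that is $1/\eta$-Lipschitz and satisfies $\mathbb{1}\{z>0\}\le \ell_\eta(z)\le \mathbb{1}\{z>-\eta\}$. Then for the population risk $R(V)=\Pr_{x\sim\mathcal{D}}\!\big(Z_V(x)>-\eta\big)$ and the empirical surrogate risk $\hat R_\eta(V)=\frac{1}{m}\sum_i \ell_\eta(Z_V(x_i))$, standard symmetrization plus Talagrand's contraction lemma give
\begin{equation*}
R(V) \le \hat R_\eta(V) + \frac{2}{\eta}\,\mathcal{R}_m(\mathcal{V}) + \text{concentration term}.
\end{equation*}
Because $\hat R_\eta(V)=0$ at the feasible minimizer (all training points satisfy $Z_V(x_i)\le 0$ by the assumed feasibility), the empirical term drops out. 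To upgrade from the slow rate $\mathcal{R}_m/\eta$ to the stated fast rate $\mathcal{R}_m^2 \log^3 m/\eta^2$, I would invoke a self-bounding / local-Rademacher argument: $\ell_\eta\circ\mathcal{V}$ takes values in $[0,1]$ and its variance is controlled by its mean, so Bartlett–Bousquet–Mendelson style peeling (or Srebro–Sridharan–Tewari for smooth non-negative losses) yields a bound that replaces the linear Rademacher factor with its square at the price of a $\log^3 m$ polylog factor. This is the step I expect to be the main technical obstacle, since it requires checking a self-bounding inequality for the ramp surrogate (or replacing it with an $H$-smooth surrogate whose $H$ scales like $1/\eta^2$) and invoking a sub-root fixed-point argument.

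Third, to remove the dependence on a pre-specified $\eta$ and obtain the quantifier ``with probability $1-\delta$'' simultaneously valid for a chosen $\eta$, I would apply a geometric-grid union bound: consider $\eta_k = 2^{-k}\cdot 2B$ for $k=0,1,\dots,\lceil\log_2(4B/\eta)\rceil$, invoke the fixed-margin inequality at confidence $\delta_k = \delta/(k(k+1))$, and then pick the largest $\eta_k\le 2\eta$. This inflates the confidence term from $\log(1/\delta)$ to $\log(\log(4B/\eta)/\delta)$, explaining the exact form of the second summand in the lemma. Collecting the three contributions (zero empirical loss, the squared Rademacher term with polylog, and the union-bound concentration) and absorbing numerical constants into the universal constant $M$ yields the claimed bound. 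The two main obstacles I anticipate are (i) making the self-bounding / local Rademacher step clean for the ramp loss (the paper will likely appeal to a black-box result here), and (ii) ensuring that the Rademacher complexity of the \emph{composite} class $\{x\mapsto Z_V(x):V\in\mathcal{V}\}$ is controlled by $\mathcal{R}_m(\mathcal{V})$ alone, which uses boundedness of $f$ and Lipschitzness of the gradient-vector-field map implicit in the regularity assumptions.
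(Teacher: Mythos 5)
Your proposal is correct and follows essentially the same route as the paper: the paper simply invokes Theorem~5 of Srebro--Sridharan--Tewari \citep{srebro2010smoothness} as a black box, which is precisely the smooth/self-bounding fast-rate result you identify (applied to the composite class $x\mapsto\mathcal{L}_fV(x)+\lambda V(x)$ with an $H$-smooth margin surrogate, $H\sim 1/\eta^2$, plus a geometric union bound over margins producing the $\log(\log(4B/\eta)/\delta)$ term). You correctly anticipated that the paper would appeal to a black-box result rather than rederive the local-Rademacher peeling step.
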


\vspace{5pt}

\begin{remark}
Lemma \ref{Rademacher complexity of function class} is directly derived from Theorem 5 in \citep{srebro2010smoothness}, which connects the probabilistic error in Equation \eqref{Probablistic bound} to an upper bound involving the Rademacher complexity $\mathcal{R}_n(V)$. Different from the previous proof in \citep{boffi2021learning, giesl2016approximation}, we prove the general error bound of neural CLBF without any quadratic form assumptions. 
\end{remark}

\vspace{5pt}

\begin{lemma}[Concentration of the Sampled Control Policy] \label{Lemma: concentration of sampled policy}
Let \(u^1_{1:T},\dots,u^Q_{1:T}\) be i.i.d.\ random vectors in \([a,b]^T\) from the target distribution $U$ in Equation \eqref{eq:sample_distribution}, and define
\[
u^*_{1:T} \;=\;\mathbb{E}[\,u_{1:T}\,], 
\quad
\hat u_{1:T} \;=\;\frac1Q\sum_{i=1}^Q u^i_{1:T}.
\]
Then for any \(\varepsilon>0\),
\[
\Pr\bigl(\|\hat u_{1:T}-u^*_{1:T}\|_\infty \ge \varepsilon\bigr)
\;\le\;2\,T\,\exp\!\Bigl(-\frac{2Q\varepsilon^2}{(b-a)^2}\Bigr).
\]
In particular, if
\[
Q \;\ge\;\frac{(b-a)^2}{2\varepsilon^2}\,\log\!\Bigl(\frac{2T}{\delta}\Bigr),
\]
then with probability at least \(1-\delta\),
\[
\|\hat u_{1:T}-u^*_{1:T}\|_\infty \le \varepsilon.
\]
\end{lemma}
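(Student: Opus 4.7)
The plan is to reduce the multivariate concentration claim to a collection of one-dimensional Hoeffding estimates and then close the argument with a union bound. First I would fix a coordinate $t\in\{1,\dots,T\}$ and observe that the samples $u^1_t,\dots,u^Q_t$ are i.i.d.\ random variables taking values in the bounded interval $[a,b]$ with mean $u^*_t = \mathbb{E}[u_t]$. Hoeffding's inequality, which applies verbatim to bounded i.i.d.\ variables, then yields
\[
\Pr\bigl(|\hat u_t - u^*_t|\ge\varepsilon\bigr)\;\le\;2\exp\!\Bigl(-\tfrac{2Q\varepsilon^2}{(b-a)^2}\Bigr)
\]
for each $t$.

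Next I would pass from coordinates to the $\ell_\infty$ norm. The event $\{\|\hat u_{1:T}-u^*_{1:T}\|_\infty\ge\varepsilon\}$ equals the union $\bigcup_{t=1}^{T}\{|\hat u_t-u^*_t|\ge\varepsilon\}$, so a single union bound over $t$ multiplies the per-coordinate tail by a factor of $T$, yielding the stated inequality
\[
\Pr\bigl(\|\hat u_{1:T}-u^*_{1:T}\|_\infty\ge\varepsilon\bigr)\;\le\;2T\exp\!\Bigl(-\tfrac{2Q\varepsilon^2}{(b-a)^2}\Bigr).
\]
It is worth noting that the coordinates of the trajectory $u_{1:T}$ need not be independent across $t$ — we only use independence across the $Q$ i.i.d.\ trajectories — which is exactly what the union bound handles.

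Finally, for the sample-complexity consequence, I would set the right-hand side equal to $\delta$, i.e.\ solve $2T\exp(-2Q\varepsilon^2/(b-a)^2)=\delta$, and invert to obtain the threshold $Q\ge \tfrac{(b-a)^2}{2\varepsilon^2}\log(2T/\delta)$. Under this condition the tail probability is at most $\delta$, so $\|\hat u_{1:T}-u^*_{1:T}\|_\infty\le\varepsilon$ holds with probability at least $1-\delta$.

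Honestly, there is no substantive obstacle: the statement is a textbook Hoeffding-plus-union-bound argument, and the only step that requires mild care is to be explicit that independence is used across the trajectory index $i\in\{1,\dots,Q\}$ rather than across the time index $t$, so that the bound is valid even when the samples $u^i_t$ are strongly correlated across $t$ (as they typically are under diffusion sampling from $p(U)$). Aside from this observation, each step is routine.
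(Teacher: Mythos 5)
Your proof matches the paper's argument exactly: a per-coordinate Hoeffding bound over the $Q$ i.i.d.\ trajectory samples, followed by a union bound over the $T$ time steps, then inverting the tail bound to read off the sample-size threshold. The added remark that independence is only needed across the trajectory index $i$, not across time $t$, is a correct and worthwhile clarification but does not change the substance of the argument.
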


The proof of Lemma \ref{Lemma: concentration of sampled policy} follows by applying Hoeffding’s inequality to each coordinate of the trajectory and then taking a union bound over the $T$ time‐steps. When the sample number $Q$ is greater than $\frac{(b-a)^2}{2\epsilon^2} \log(\frac{2T}{\delta})$, the probability is at least $1-\delta$. 

\vspace{5pt}

To prove the convergence of violation region's volumn of a neural CLBF trained using diffusion-sampled policies, two key factors must be taken into account: (1) the Rademacher complexity of the function class \( \mathcal{V} \), denoted \( \mathcal{R}(\mathcal{V}) \); and (2) the perturbations introduced by the diffusion-sampled policy. In contrast to the noiseless case, we denote the Rademacher complexity of the learned neural CLBF under diffusion-induced noise as \( \mathcal{R}(\hat{\mathcal{V}}) \).

\vspace{2pt}

\begin{theorem}[Sample Complexity of Neural CLBF]
Under the regularized assumptions, the function class $\mathcal{V}$ is controlled by a $N-$layer neural network with $ReLU$ activation function such that  $V(x) = W_{N}\sigma_{N-1}(W_{N-1}\cdots\sigma_{1}(W_1x))$. The following statement holds with probability at least $1-2\delta$ over the randomness of $x_1, \dots, x_m$ drawn i.i.d. from dataset $\mathcal{D}$ and $Q > \frac{(b-a)^2}{2\epsilon^2} \log(\frac{2T}{\delta})$: 
\begin{equation}
\begin{split}
    & p_{x \sim \mathcal{D}}(\mathcal{L}_fV(x) + \lambda V(x) > - \eta) \\
    \leq & M  \bigg[ \frac{\log^3 m}{\eta^2}  \underbrace{\bigg(\frac{R}{\sqrt{m}} \big( (\| f \|_{\infty}  + \lambda ) \Pi_{i = 1}^N \| W_i \|_2  \big)  \bigg( \sum_{i=1}^N \big( \frac{\| W_i \|_{2,1}}{\| W_i \|_2} \big)^{2/3} \bigg)^{3/2} + \| \nabla_u f \|_{\infty} \| \Pi_{i =1}^N W_i \epsilon \bigg)^2}_{\mathcal{R}^2_m(\mathcal{\hat{V}})} \\ 
    & + \frac{2\log(\log(4B/\eta)/\delta)}{m}  \bigg].
\end{split}
\end{equation}
Here, $\| W \|_2$ is the spectral norm of $W$ and $\| W \|_{2,1} \triangleq \sum_l \sqrt{\sum_k W_{l,k}^2}$ is denoted the $1-$norm of $2-$norm of the rows of $W$. Also, $\| f \|_{\infty} \triangleq \sup_{x \in \mathcal{X}, u \in \mathcal{U}} \|f(x,u) \|_2$, $R \triangleq \sup_{x \in \mathcal{X}} \| x\|_2$. 
\end{theorem}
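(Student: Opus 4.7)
The plan is to combine Lemma~\ref{Rademacher complexity of function class} (which already converts a probabilistic violation bound into a Rademacher-complexity bound) with (i) a spectral-norm Rademacher bound for the neural class $\mathcal{V}$ tailored to the functional $\mathcal{L}_f V + \lambda V$, and (ii) the concentration guarantee of the diffusion-sampled policy from Lemma~\ref{Lemma: concentration of sampled policy}. The bound to be established reduces to producing the correct expression for $\mathcal{R}^2_m(\hat{\mathcal{V}})$, where the hat denotes the class induced jointly by $V\in\mathcal{V}$ and by the sampled, rather than ideal, control policy.

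First I would invoke Lemma~\ref{Rademacher complexity of function class} to obtain the abstract inequality controlling $p_{x\sim\mathcal{D}}\bigl(\mathcal{L}_fV(x)+\lambda V(x) > -\eta\bigr)$ by $\mathcal{R}_m$ of the composite loss class, which holds with probability $\ge 1-\delta$. The core task is then to upper bound $\mathcal{R}_m(\hat{\mathcal{V}})$ for the map $x\mapsto \mathcal{L}_f V(x) + \lambda V(x) = \nabla V(x)^\top f(x,u) + \lambda V(x)$. I would split this into (a) the Rademacher complexity of the neural scalar map $V$ and its gradient field over $\mathcal{X}$, and (b) an additive perturbation accounting for using sampled inputs $\hat u$ rather than the ideal $u^\star$.

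For part (a), I would apply the spectrally-normalized Bartlett--Foster--Telgarsky-style bound for $N$-layer ReLU networks with bounded inputs, which yields the factor
\[
\frac{R}{\sqrt{m}}\,\prod_{i=1}^N\|W_i\|_2\,\Bigl(\sum_{i=1}^N\bigl(\|W_i\|_{2,1}/\|W_i\|_2\bigr)^{2/3}\Bigr)^{3/2}.
\]
Multiplying by the Lipschitz constant of the map $V\mapsto \mathcal{L}_f V + \lambda V$, which under $\|f\|_\infty\le B$ contributes the prefactor $(\|f\|_\infty+\lambda)$, gives the first summand inside the square of $\mathcal{R}^2_m(\hat{\mathcal{V}})$. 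For part (b), by Lemma~\ref{Lemma: concentration of sampled policy} with $Q\ge \tfrac{(b-a)^2}{2\varepsilon^2}\log(2T/\delta)$, we have $\|\hat u - u^\star\|_\infty\le \varepsilon$ with probability $\ge 1-\delta$; a first-order expansion of $f$ in $u$ then propagates this perturbation through the network gradient $\nabla V$, yielding an additive term controlled by $\|\nabla_u f\|_\infty\,\|\prod_{i=1}^N W_i\|\,\varepsilon$, which matches the second summand. Squaring the sum of the two contributions produces the stated $\mathcal{R}^2_m(\hat{\mathcal{V}})$.

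Finally, I would take a union bound over the two high-probability events (the Rademacher generalization event from Lemma~\ref{Rademacher complexity of function class} and the policy-concentration event from Lemma~\ref{Lemma: concentration of sampled policy}) to upgrade the confidence from $1-\delta$ to $1-2\delta$, completing the sample-complexity statement. The main obstacle I anticipate is bounding the Rademacher complexity of the \emph{gradient} field of a ReLU network—since the base Bartlett--Foster--Telgarsky bound is stated for scalar outputs—and coupling it correctly with the $f$-dependent vector field; this requires arguing that the derivative map $x\mapsto \nabla V(x)^\top f(x,u)$ inherits the same product-of-spectral-norm growth, up to the $(\|f\|_\infty+\lambda)$ Lipschitz prefactor, which in turn justifies the precise functional form written in the theorem.
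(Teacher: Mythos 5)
Your proposal takes essentially the same route as the paper: invoke Lemma~\ref{Rademacher complexity of function class} for the abstract violation bound, split the Rademacher complexity of the composite map $\mathcal{L}_fV + \lambda V$ into the spectrally-normalized Bartlett--Foster--Telgarsky bound for the network, a $\lambda$-scaled copy, and an additive perturbation $\|\nabla_u f\|_\infty \|\prod_i W_i\| \varepsilon$ coming from the diffusion-sampled policy via Lemma~\ref{Lemma: concentration of sampled policy}, then union-bound the two high-probability events to reach $1-2\delta$. The one step you rightly flag as the obstacle — passing the BFT scalar-output bound through the gradient field $x\mapsto \nabla V(x)^\top f(x,u)$ — the paper resolves by applying the Ledoux--Talagrand contraction to extract $\|f\|_\infty$ and noting that, on each linear region of the ReLU network, the Jacobian is a product of (activation-masked) weight matrices, so the same $\prod_i\|W_i\|_2$ spectral-norm product controls it; with that filled in, your sketch coincides with the paper's argument.
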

\begin{proof}
We now take $V \in \mathcal{V}$ to be the class of $N-$layer neural network with ReLU activation function:
\begin{equation}
    V(x) = W_{N}\sigma_{N-1}(W_{N-1}\cdots\sigma_{1}(W_1x)), \quad \sigma_i(z) = \max\{0, z\},
\end{equation}
with bounded spectral norm and row‐sum norm $\| W \|_2$ and $\| W \|_{2,1} $, respectively. Bartlett–Foster–Telgarsky \citep{bartlett2017spectrally} show that the Rademacher complexity of $N-$layer neural network with $\rho_i-$lipschitz activation function is 
\begin{equation}
    \mathcal{R}_m \leq \frac{\Pi_{i = 1}^N \rho_i \| W_i \|_2 R }{\sqrt{m}} \bigg(  \sum_{i=1}^N ( \frac{\| W_i \|_{2,1}}{\| W_i \|_2} \big)^{2/3} \bigg)^{3/2},  \label{Equation: rademacher of NN}
\end{equation}
where $R \triangleq \sup_{x \in \mathcal{X}} \|x \|_2$. 

Back to the Rademacher complexity $\mathcal{R}_m(\mathcal{\hat{V}})$, $\mathcal{\hat{V}}$ is different from $\mathcal{V}$ since it also influenced by perturbations from the diffusion-sampled policy. we first adapt a fundamental fact from the calculus of Rademacher
complexities \citep{bartlett2002rademacher}, along with the trivial identity 
\[
\mathcal{R}_m(\mathcal{\hat{V}}) \leq \mathcal{R}_m(\mathcal{V}) + \mathbb{E}_{\epsilon}[ \sup_{V \in \mathcal{V}} \frac{1}{m} | \sum_{i = 1}^m \epsilon \big( \mathcal{L}_{f(x_i, \hat{u}_i)}V(x_i) - \mathcal{L}_{f(x_i, u_i)}V(x_i) \big)|]. 
\]
Since $|\frac{1}{m} \sum_{i = 1}^m \epsilon_i \delta_i| \leq \frac{1}{m} \sum_{i = 1}^m | \delta_i | \leq \sup_i | \delta_i|$,  we can further decompose it as
\begin{equation}
\begin{split}
    \mathcal{R}_m(\mathcal{V}) & \leq  \mathbb{E}_{\epsilon} \bigg[\sup_{V \in \mathcal{V}} \frac{1}{m} \big| \sum^m_{i=1} \epsilon_i (\mathcal{L}_fV(x_i)
    + \lambda V(x_i)) \big| \bigg]  \\
    & + \sup_{x \in \mathcal{X}} \sup_{V \in \mathcal{V}} | \langle f(x_i, \hat{u}_i), \nabla V(x_i) \rangle - \langle f(x_i, u_i), \nabla V(x_i) \rangle | \\
    & \leq \underbrace{\mathbb{E}_{\epsilon} \bigg[\sup_{V \in \mathcal{V}} \frac{1}{m} \big| \sum^m_{i=1} \epsilon_i (\mathcal{L}_fV(x_i) \big| \bigg]}_{\mathcal{R}_1} + \underbrace{\mathbb{E}_{\epsilon} \bigg[\sup_{V \in \mathcal{V}} \frac{1}{m} \big| \sum^m_{i=1} \epsilon_i (\lambda V(x_i) \big| \bigg]}_{\mathcal{R}_2} \\
    & + \underbrace{\sup_{x \in \mathcal{X}} \sup_{V \in \mathcal{V}} | \langle f(x_i, \hat{u}_i), \nabla V(x_i) \rangle - \langle f(x_i, u_i), \nabla V(x_i) \rangle |}_{\Delta_{\text{policy}}}. 
\end{split}
\end{equation}
$\mathcal{R}_2$ is exactly follows the complexity in Equation \eqref{Equation: rademacher of NN}. Based on the definition of Lie derivative, the Rademacher complexity $\mathcal{R}_1$ can be bounded as 
\begin{equation}
\begin{split}
    \mathcal{R}_1 & = \mathbb{E}_{\epsilon} \bigg[\sup_{V \in \mathcal{V}} \frac{1}{m} \big| \sum^m_{i=1} \epsilon_i (\mathcal{L}_fV(x_i) \big| \bigg] \\
    & \leq \| f \|_{\infty} \mathbb{E}_{\epsilon} \bigg[\sup_{V \in \mathcal{V}} \frac{1}{m}  \big| \sum^m_{i=1} \sup_{\| v_i \| =1}  \epsilon_i \langle v_i, \nabla V(x_i) \rangle \big| \bigg] \\
    & \leq  \frac{ \|f\|_{\infty} \| \Pi_{i = 1}^N \rho_i \| W_i \|_2 R }{\sqrt{m}} \bigg( \sum_{i = 1}^N \big( \frac{\| W_i \|_{2,1}}{\| W_i \|_2} \big)^{2/3} \bigg)^{3/2}. \label{Equation: R2}
\end{split}
\end{equation}
The derivation of Equation \eqref{Equation: R2} from the first to second line is based on Ledoux-Talagrand lemma \citep{van2016symmetrization} \footnote{For arbitrary $L-$Lipschitz function $\phi: \mathbb{R}^n \to \mathbb{R}$, we have 
\begin{equation}
    \mathbb{E}_{\epsilon} \bigg[\sup_{g \in \mathcal{G}} \frac{1}{m} \big| \sum^m_{i=1} \epsilon_i (\phi (g(x_i)) \big| \bigg] \leq L \mathbb{E}_{\epsilon} \bigg[\sup_{g \in \mathcal{G}} \frac{1}{m} \big| \sum^m_{i=1} \epsilon_i g(x_i) \big| \bigg]. 
\end{equation}}  and Lie derivative properties $| \mathcal{L}_fV(x) | = \| f \|_{\infty} \| \nabla V(x) \|_2$. The second line to third line is from the Jacobian Lipschitz property, in which we can use $\Pi_{i = 1}^N \rho_i \| W_i \|_2  $ to bound the norm of Jacobian matrix. On each linear piece of the ReLU network, the gradient is given by the same weight matrices, so the spectral‐norm product likewise controls the complexity of the Jacobian. 

The third term $\Delta_{\text{policy}}$ can be bounded according Lemma \ref{Lemma: concentration of sampled policy}, and the probability holds with at least $1-\delta$ when sample number $Q > \frac{(b-a)^2}{2\epsilon^2} \log(\frac{2T}{\delta})$: 
\begin{equation}
\begin{split} \label{Diffusion policy error}
    \Delta_{\text{policy}} = & \sup_{x \in \mathcal{X}} \sup_{V \in \mathcal{V}} | \langle f(x_i, \hat{u}_i), \nabla V(x_i) \rangle - \langle f(x_i, u_i), \nabla V(x_i) \rangle | \\
    \leq  & \sup_{x \in \mathcal{X}} \sup_{V \in \mathcal{V}} | \langle f(x_i, \hat{u}_i) - f(x_i, u_i), \nabla V(x_i) \rangle \\  
    \leq & \| \nabla_u f \|_{\infty} \| \nabla V \|_{\infty} \epsilon \leq \| \nabla_u f \|_{\infty} \| \Pi_{i =1}^N \rho_i \| W_i \|_2 \epsilon. 
\end{split}
\end{equation}

Due to the properties of ReLU activation function, $\rho_i$ is bounded by $1$. Combining the $\mathcal{R}_1$,  $\mathcal{R}_2$,  Lemma \ref{Rademacher complexity of function class}, and perturbations of diffusion-sampled policy in Equation \eqref{Diffusion policy error} together, we obtain that the probability holds at least $1 - 2\delta$: 
\begin{equation}
\begin{split}
    & p_{x \sim \mathcal{D}}(\mathcal{L}_fV(x) + \lambda V(x) > - \eta) \\
    \leq & M  \bigg[ \frac{\log^3 m}{\eta^2}  \bigg(\frac{R}{\sqrt{m}} \big( (\| f \|_{\infty}  + \lambda ) \Pi_{i = 1}^N \| W_i \|_2  \big)  \bigg( \sum_{i=1}^N \big( \frac{\| W_i \|_{2,1}}{\| W_i \|_2} \big)^{2/3} \bigg)^{3/2} + \| \nabla_u f \|_{\infty} \Pi_{i =1}^N \| W_i \|_2 \epsilon \bigg)^2 \\ 
    & + \frac{2\log(\log(4B/\eta)/\delta)}{m}  \bigg]  
\end{split}
\end{equation}
Here, $M$ is a universally bounded constant, which avoids to use many constant terms. We can know that the volume of violation region will decrease as $\Tilde{\mathcal{O}}(\frac{1}{m})$  The proof end. 
\end{proof}

\subsection{Almost Sure Guarantees} \label{appendix:almost_lyapunov}
To establish safety and stability with an almost sure guarantee, we also adopt the regularity assumptions detailed in Appendix~\ref{sample complexity analysis}. Specifically, we assume that the gradients and Hessians of the candidate CLBF, as well as the dynamics function and its Jacobian, are uniformly bounded over the compact sets \( \mathcal{X} \) and \( \mathcal{U} \). That is,
\[
\|\nabla V\|_{\infty}, \quad \|\nabla^2 V\|_{\infty}, \quad \|f\|_{\infty}, \quad \|\nabla f\|_{\infty} \leq B
\]
for some constant \( B > 0 \).

\begin{lemma}[Comparison Lemma \citep{artstein1983stabilization}] \label{Comparison Lemma}
Suppose that a continuously differentiable function $V\in \mathbb{R}^+ \mapsto \mathbb{R}$ satisfies the following differential inequality:
\begin{equation}
    \dot{V} (x_t) \leq -\lambda V(x_t) + C, \quad \forall t \in \mathbb{R}^+,
\end{equation}
where $\lambda \in \mathbb{R}^+$, $C \in \mathbb{R}$, and $\dot{V} (x_0) \in \mathbb{R}$.  Then, we have 
\begin{equation}
     V(x_t) \leq \exp(-\lambda t)V(x_0) + \frac{C}{\lambda} \big(1-\exp(-\lambda t)\big). 
\end{equation}
\end{lemma}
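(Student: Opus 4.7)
The plan is to establish the Comparison Lemma via the classical integrating-factor trick, which converts the differential inequality into an integrable derivative and lets the bound follow by a single integration. The integrating factor $e^{\lambda t}$ is the natural choice here because $\lambda > 0$ is the coefficient of $V$ on the right-hand side of the hypothesized inequality $\dot V(x_t) \le -\lambda V(x_t) + C$.

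First I would rearrange the hypothesis to $\dot V(x_t) + \lambda V(x_t) \le C$ and then multiply through by the strictly positive function $e^{\lambda t}$. This yields
\begin{equation}
e^{\lambda t}\dot V(x_t) + \lambda e^{\lambda t} V(x_t) \;\le\; C e^{\lambda t},
\end{equation}
and the key observation is that the left-hand side is exactly $\frac{d}{dt}\bigl[e^{\lambda t} V(x_t)\bigr]$ by the product rule, which is why $V$ being continuously differentiable is needed. Next I would integrate both sides from $0$ to $t$, obtaining $e^{\lambda t} V(x_t) - V(x_0) \le \frac{C}{\lambda}\bigl(e^{\lambda t} - 1\bigr)$, and finally multiply through by $e^{-\lambda t}$ to arrive at $V(x_t) \le e^{-\lambda t} V(x_0) + \frac{C}{\lambda}\bigl(1 - e^{-\lambda t}\bigr)$, which is the stated conclusion.

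There is essentially no main obstacle; the argument is a two-line manipulation once the integrating factor is identified. The only care required is that integrating a pointwise inequality between two continuous functions over a fixed interval preserves the inequality (a standard consequence of monotonicity of the Riemann integral), and that $e^{\lambda t}$ is positive so multiplication does not flip the sign. No assumption on the sign of $C$ is required: the identity $\int_0^t C e^{\lambda s}\,ds = \frac{C}{\lambda}(e^{\lambda t}-1)$ holds for any real $C$, so the conclusion is valid whether $C$ is positive (giving the additive offset interpretation used later in the almost-Lyapunov bound) or negative (recovering pure exponential decay). If one wished to be pedantic, the step from $\dot V + \lambda V \le C$ to $\frac{d}{dt}(e^{\lambda t}V) \le C e^{\lambda t}$ could be justified by defining $W(t) := e^{\lambda t}V(x_t) - \frac{C}{\lambda}(e^{\lambda t}-1) - V(x_0)$, observing $W(0)=0$ and $\dot W(t) \le 0$, and concluding $W(t) \le 0$ by the mean value theorem; but the integrating-factor argument above is cleaner and sufficient.
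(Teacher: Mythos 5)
Your integrating-factor argument is correct and is the standard derivation of this classical comparison lemma; the paper itself does not include a proof, citing the result from the literature instead, so there is nothing in the paper's treatment that your argument deviates from. Your derivation $\frac{d}{dt}\bigl(e^{\lambda t}V(x_t)\bigr) \le C e^{\lambda t}$, followed by integration over $[0,t]$ and multiplication by $e^{-\lambda t}$, yields exactly the stated bound, and your remarks on sign-preservation and the validity for arbitrary real $C$ are accurate.
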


\begin{theorem}[Safety and Stability with Almost Sure Guarantees] \label{Formal Theorem: Safety and Stability with Almost Sure Guarantees}
Let $\mathcal{X}$ be a compact state space and consider the continuously differentiable dynamical system $f$ in Equation \eqref{eq:dynamics}. Let $V:\mathcal{X}\to\mathbb{R}^+$ be a smooth positive definite function. Assume that there exist constants $\lambda>0$ and $\epsilon>0$, and an invariant, connected, non-self-overlapping and measurable set $\Omega\subset\mathcal{X}$ with small volume $\text{Vol}(\Omega)<\epsilon$, 
such that the following holds:
\begin{enumerate}
  \item[\textbf{(A)}] For every $x\in\mathcal{X}\setminus \Omega$, the Lie derivative of $V$ along $f$ satisfies
    \(
      \min_{u \in \mathcal{U}} \mathcal{L}_fV(x) < -\lambda\, V(x).
    \)
  \item[\textbf{(B)}] For $x\in \Omega$, we allow 
    \(
      \mathcal{L}_fV(x) \ge -\lambda\,V(x)
    \)
    without any further restrictions, i.e. no uniform dissipation condition.
\end{enumerate}
Then, there exist positive constants $\lambda_1$ and $C$, with $0 < \lambda_1 < \lambda$, such that for any $x_0 \in \mathcal{X}_s \subset \mathcal{X}$, the solution $x_t$ of Equation \eqref{eq:dynamics} under the diffusion-sampled policy satisfies, almost surely,
\begin{equation}\label{eq:decay}
V(x_t) \le \exp(-\lambda_1 t)V(x_0) + C\epsilon^{\frac{1}{n}}, \quad \forall t \ge 0.
\end{equation}
In other words, the influence of the “bad” region $\Omega$ introduces only an additive term of order $\mathcal{O}(\epsilon^{\frac{1}{n}})$, ensuring that the overall decay remains \emph{almost} exponential over time. 
\end{theorem}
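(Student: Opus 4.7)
The plan is to track the continuous trajectory $\{x_t\}_{t \geq 0}$ generated by the diffusion-sampled policy and analyze $V(x_t)$ piecewise, splitting time into intervals in which $x_t$ lies in the good region $\mathcal{X} \setminus \Omega$ (where strict dissipation holds by condition~(A)) and intervals spent inside the bad region $\Omega$ (where no dissipation is guaranteed). On a good interval starting at time $s$, condition~(A), combined with concentration of the diffusion-sampled policy on near-minimizers of $\mathcal{L}_f V$ as $\gamma_2 \to 0$ in $p_{\text{stable}}$ (Equation~\eqref{energy parameterization of stability}, with the high-probability guarantee of Lemma~\ref{Lemma: concentration of sampled policy}), ensures $\dot V(x_t) \leq -\lambda\, V(x_t)$ almost surely. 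Invoking the Comparison Lemma (Lemma~\ref{Comparison Lemma}) with $C = 0$ then delivers pure exponential decay $V(x_t) \leq e^{-\lambda (t-s)} V(x_s)$ throughout that interval.

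On each excursion into $\Omega$, I would use the regularity bounds $\|\nabla V\|_\infty, \|f\|_\infty \leq B$ stated at the top of Appendix~\ref{appendix:almost_lyapunov} to obtain $\dot V(x_t) \leq B^2$, and hence the linear growth bound $V(x_t) \leq V(x_s) + B^2 (t-s)$ within the excursion. The geometric heart of the proof is then to control the duration $\Delta t$ of each excursion: since $\Omega \subset \mathbb{R}^n$ is connected, measurable, and has $\text{Vol}(\Omega) < \epsilon$, an isoperimetric-style bound (using the ``non-self-overlapping'' assumption to discount pathological thin-tube configurations) gives that the arclength of any continuous path inside $\Omega$ is at most a universal constant times $\epsilon^{1/n}$. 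Combined with $\|f\|_\infty \leq B$, this yields $\Delta t = O(\epsilon^{1/n})$ per excursion, so $V$ increases by at most $O(\epsilon^{1/n})$ per visit.

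To combine the two regimes I would proceed by induction over the sequence of alternating good and bad segments. Let $V_k$ denote the value of $V$ just before the $k$-th excursion and let $\tau_k$ be the length of the good interval preceding it; the good-phase bound gives $V_k \leq e^{-\lambda \tau_k} V_{k-1}$, while the excursion adds at most $C' \epsilon^{1/n}$ before the next good phase begins. Choosing $\lambda_1 \in (0, \lambda)$ slightly smaller than $\lambda$ to absorb the per-bump loss in the effective rate, and telescoping the recursion, turns the sequence of additive bumps into a geometric series that sums to a finite $C \epsilon^{1/n}$. The net bound is exactly Equation~\eqref{eq:decay}.

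The step I expect to be hardest is making the geometric chord-length estimate fully rigorous. A general connected set of small volume in $\mathbb{R}^n$ can have arbitrarily large diameter (thin tubes being the canonical counterexample), so the $\epsilon^{1/n}$ scaling is not a direct consequence of connectedness alone. I would either interpret ``non-self-overlapping'' as a shape-regularity constraint that rules out such tubes, or replace $\Omega$ by a minimal ball enclosing the segment of $\Omega$ actually swept by the trajectory (whose radius by the volume constraint scales as $\epsilon^{1/n}$) and treat re-entries as independent excursions. A secondary technical obstacle is upgrading the high-probability policy concentration from Lemma~\ref{Lemma: concentration of sampled policy} into a genuine almost-sure claim along the continuous trajectory, which I would handle with a Borel--Cantelli argument applied to the discrete sequence of sampling events used by the diffusion sampler.
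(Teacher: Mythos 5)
Your decomposition into alternating good/bad time intervals is a genuinely different route from the paper, and it has two gaps that your own "hardest step" warning already half-identifies. First, the excursion-duration estimate $\Delta t = O(\epsilon^{1/n})$ is false in general: a tube of fixed length $L = 1$ and cross-sectional radius $\rho \sim \epsilon^{1/(n-1)}$ has volume $\sim \epsilon$, is connected and non-self-overlapping, yet a trajectory traveling along its axis spends order-one time inside it. Connectedness, measurability, and "non-self-overlapping" (a straight tube does not overlap itself) do not rule this out, so there is no isoperimetric-style chord-length bound of the form you need. Your proposed fallback of enclosing the swept segment in a minimal ball does not help either, since the ball must enclose the whole traversed arc and hence has radius $\sim L$, not $\sim \epsilon^{1/n}$. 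Second, even if a single excursion were short, the stochastic perturbation from the diffusion sampler can drive the trajectory back into $\Omega$ repeatedly; your telescoping recursion implicitly requires control over the number of excursions or their cumulative duration, which you do not establish.

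The paper's proof avoids both problems by not bounding time in $\Omega$ at all. Instead it bounds $\mathcal{L}_f V$ pointwise everywhere, including inside $\Omega$: since $\text{Vol}(\Omega) < \epsilon$, every $x \in \Omega$ has a point $x'$ at distance $O(\epsilon^{1/n})$ lying outside a slightly enlarged buffer $\Omega_r$ (because a ball of radius $\sim \epsilon^{1/n}$ cannot fit inside a set of volume $\sim \epsilon$), and the Lipschitz bounds on $f$, $\nabla V$, $\nabla^2 V$ propagate the strict dissipation at $x'$ back to $x$ up to an additive error $O(\epsilon^{1/n})$. This converts conditions (A) and (B) into a single differential inequality $\dot V \leq -(1-\epsilon_1)\lambda V + C\epsilon^{1/n}$ holding uniformly outside a small ball around the equilibrium, regardless of how long the trajectory lingers in $\Omega$. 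The almost-sure claim is then obtained not by Borel--Cantelli on sampling events but by modeling the diffusion-sampled closed loop as an SDE, applying Itô's formula, building a discrete-time supermartingale $Z_k = V(x_{kh})$, and combining Azuma--Hoeffding with Borel--Cantelli on the martingale increments before passing back to continuous time through the Comparison Lemma. If you want to salvage your excursion-based decomposition, you would need an extra assumption that $\Omega$ has bounded eccentricity (small diameter, not merely small volume), which is strictly stronger than what the theorem states and is precisely what the paper's Lipschitz-propagation trick lets it avoid.
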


\begin{proof} We split the argument into two main parts.  In Step 1 we show that, by choosing a small “buffer” around the bad set, we recover a uniform dissipation everywhere else.  In Step 2 we discretize the trajectory generated by diffusion policy, build a suitable supermartingale, and invoke Azuma–Hoeffding together with the Borel–Cantelli lemma to conclude almost‐sure exponential decay.

\textbf{Step 1.} Uniform dissipation outside a small ball.

We firstly define $l^2-$ball with radius $r$ in Euclidean space $\mathbb{R}^n$, it can be formally written as 
\[
    \mathbb{B}(q, r) \triangleq \{ x \mid \| x -q \|_2 \leq r \}. 
\]


Concretely, for any radius $r>0$ define the $r-$neighborhood of $\Omega$ by 
\[
\Omega_r = \{ x \in \mathcal{X}: \text{dist}(x, \Omega) \leq r \},
\]
which is equivalently the Minkowski sum $\Omega + \mathbb{B}(0, r)$. By standard volume growth estimates (e.g., from Steiner's formula), the volume of \( \Omega_r \) satisfies
\[
Vol(\Omega_r) \leq Vol(\Omega) + \mathcal{O}(r),
\]
for some constant \( C_n > 0 \) depending on the geometry of \( \Omega \) and the ambient dimension \( n \). Therefore, by choosing \( r = \mathcal{O}(\epsilon^{1/n}) \), we can ensure that \( Vol(\Omega_r) \leq 2\epsilon \) for sufficiently small \( \epsilon \). This allows us to recover a uniform dissipation on the complement \( \mathcal{X} \setminus \Omega_r \).

To analyze the local behavior, if $x_t$ is in the small bad region $\Omega$. In such case, the extreme point can be write as follow: 
\begin{equation}
\begin{split}
       \sup_{t} \mathcal{L}_fV(x_t)  & \leq \mathcal{L}_fV(x^{\prime}_t) + \sup_{t} \| \mathcal{L}_fV(x^{\prime}_t) - \mathcal{L}_fV(x_t)   \| \\
       & \leq -\lambda V(x^{\prime}_t) + \sup_{t} \| \langle f(x_t^{\prime},u_t^{\prime} ), \nabla V(x^{\prime}_t) \rangle -  \langle f(x_t, u_t), \nabla V(x_t) \rangle\| \\
       & \leq -\lambda V(x^{\prime}_t) + (\| \nabla f \|_{\infty} \| \nabla V \|_{\infty} + \| \nabla \nabla V \| \| f \|_{\infty}  ) \sup_{t}  \| 
x_t^{\prime} - x_t \|,
\end{split}
\end{equation}
where the derivation from this inequality is based on the regularity assumptions of $f$, $V$ and $u$. Here, the extreme point $x_t$ is lying in the bad region $\Omega$, and $x_t^{\prime}$ is just outside of a covering region $\Omega_r$. Since $\Omega$ is always contained in a pre-defined region $\Omega_r$. We can always find $x_t^{\prime}$, with the distance between $x_t$ and $x_t^{\prime}$ bounded by $c_1\epsilon^{\frac{1}{n}}$ with $c_1>0$. The fact can be easy to derive based on the geometric properties - the volume of hyper-sphere. According to this fact, we have 
\begin{equation}
\begin{split}
       &  -\lambda V(x^{\prime}_t) + (\| \nabla f \|_{\infty} \| \nabla V \|_{\infty} + \| \nabla \nabla V \| \| f \|_{\infty}  ) \sup_{t}  \| 
x_t^{\prime} - x_t \| \\
\leq &  -\lambda V(x^{\prime}_t) + (\| \nabla f \|_{\infty} \| \nabla V \|_{\infty} + \| \nabla \nabla V \| \| f \|_{\infty}  ) c \epsilon^{\frac{1}{n}} \\
\leq & - (1-\epsilon_1)\lambda V(x^{\prime}_t) - \epsilon_1\lambda V(x^{\prime}_t) + (\| \nabla f \|_{\infty} \| \nabla V \|_{\infty} + \| \nabla \nabla V \| \| f \|_{\infty}  ) c \epsilon^{\frac{1}{n}} \\
\leq & - (1-\epsilon_1)\lambda V(x^{\prime}_t) \underbrace{- \epsilon_1\lambda V(x^{\prime}_t) + 2cB^2 \epsilon^{\frac{1}{n}}}_{\leq C \epsilon^{\frac{1}{n}}.}.
\label{l^2 ball bound}
\end{split}
\end{equation}
According to the radially unbounded of CLBF see Proposition \ref{proposition: lyapunov}, we have the property that $V(x_t) \geq \mu \| x_t\|^2$ with some $\mu> 0$.  Then, we have the following result such that 
\begin{equation}
     - (1-\epsilon_1)\lambda V(x^{\prime}_t) - \epsilon_1\lambda V(x^{\prime}_t) + 2cB^2 \epsilon^{\frac{1}{n}} \leq - (1-\epsilon_1)\lambda V(x^{\prime}_t), \quad \forall x^{\prime}_t \in \mathcal{X}\setminus \mathbb{B}(0, \sqrt{\frac{2cB^2 \epsilon^{\frac{1}{n}}}{\epsilon_1 \lambda \mu}}).
\end{equation}


\paragraph{Step 2.} Probabilistic convergence via a supermartingale

\textbf{Stochastic model.}
Assume the controlled dynamics via diffusion policy are given by
\[
  \mathrm{d}x_t = f(x_t, u_t)\,\mathrm{d}t + \gamma_2\,\Sigma(x_t)\,\mathrm{d}W_t,
\]
where \( W_t \) is an \( n \)-dimensional Wiener noise,
Let \( h > 0 \) be a fixed time step and define the discrete-time sequence
\[
  Z_k := V(x_{t_k}), \quad \text{where } t_k = kh.
\]
Let \( \mathcal{F}_k = \sigma(x_0, x_1, \dots, x_k) \) be the associated filtration.

\textbf{Itô increment decomposition.}
Applying Itô's formula to \( V(x_t) \) over interval \( [t_k, t_{k+1}] \) yields:
\begin{align}
  Z_{k+1} - Z_k
  &= \int_{t_k}^{t_{k+1}} \mathcal{L}_f V(x_s, u_s) \,\mathrm{d}s
  + \frac{\gamma_2^2}{2} \int_{t_k}^{t_{k+1}} \mathrm{tr} \left( \Sigma^\top \nabla^2 V(x_s) \Sigma \right) \,\mathrm{d}s \nonumber \\
  &\quad + \gamma_2 M_k, \label{eq:ito}
\end{align}
where \( M_k \) is a bounded martingale increment with \( \mathbb{E}[M_k \mid \mathcal{F}_k] = 0 \), and
\[
  |M_k| \leq \Sigma_{\max} \|\nabla V\|_\infty \sqrt{h}.
\]

\textbf{Uniform dissipation outside small region.}
Let \( \widetilde{\mathcal{X}} := \mathcal{X} \setminus \mathbb{B}(0, r_\epsilon) \), where
\[
  r_\epsilon = \sqrt{ \frac{2cB^2 \epsilon^{1/n}}{\epsilon_1 \lambda \mu} }.
\]
From Step 1, we have
\[
  \mathcal{L}_f V(x) \leq -(1 - \epsilon_1)\lambda V(x), \quad \forall x \in \widetilde{\mathcal{X}}.
\]
Moreover, if \( \gamma_2 \) is sufficiently small, then exist $\epsilon_2 > 0$ when the CLBF is radially unbounded (see Proposition \ref{proposition: lyapunov})
\[
  \frac{\gamma_2^2}{2} \, \mathrm{tr}(\Sigma^\top \nabla^2 V(x) \Sigma) \leq \epsilon_2 \lambda V(x).
\]
Combining these, we obtain from \eqref{eq:ito}:
\[
  \mathbb{E}[Z_{k+1} - Z_k \mid \mathcal{F}_k] \leq -\lambda_1 hZ_k, \quad \lambda_1 := \left(1 - \epsilon_1 - \epsilon_2 \right) \lambda.
\]
Thus, \( \{Z_k\} \) is a nonnegative supermartingale.

\textbf{Concentration via Azuma–Hoeffding \citep{mcdiarmid1989method}.}
Since the trajectory generated by diffusion policy is always bounded, there exists a constant \( \Delta_{\max} = \mathcal{O}(hB + \gamma_2 \sqrt{h} B) \) such that
\[
  |Z_{k+1} - Z_k| \leq \Delta_{\max}, \quad \text{a.s.}
\]
By Azuma–Hoeffding inequality, for any \( N \in \mathbb{N} \) and \( \eta > 0 \):
\[
  p \left( Z_N - Z_0 + \lambda_1 \sum_{k=0}^{N-1} h Z_k \geq \eta \right)
  \leq \exp\left( - \frac{2\eta^2}{N \Delta_{\max}^2} \right).
\]
Choosing \( \eta = \Delta_{\max} \sqrt{N \log N} \), we ensure the sum is finite as $\sum_{N=1}^{\infty} p(Z_N - Z_0 + \lambda_1 \sum_{k=0}^{N-1} Z_k) \leq \sum_{N=1}^{\infty} N^{-2} < \infty $. Then, by the first Borel–Cantelli lemma:
\[
  p \left( Z_{k+1} - Z_k + \lambda_1 h Z_k \geq 0 \text{ infinitely often} \right) = 0,
\]
which implies that almost surely:
\[
  Z_{k+1} - Z_k \leq -\lambda_1 h Z_k \quad \text{for all large } k.
\]
Therefore,
\[
  Z_k \leq (1 - \lambda_1 h)^k Z_0 \quad \text{a.s.}
\]
This implies that once the initial state $x_0$ lies within the safe set, safety is almost surely guaranteed, as the CLBF exhibits an almost monotonic decrease.

\textbf{Return to continuous time and add influence of } \( \Omega \).
Replacing the discrete time $k$ to continuous time $t/h$ and incorporating the worst-case additive term from the bad region \( \Omega \), which is bounded by \( C \epsilon^{1/n} \) in Equation \eqref{l^2 ball bound}, we invoke the Comparison Lemma \ref{Comparison Lemma} to conclude:
\[
  V(x_t) \leq \exp(-\lambda_1 t) V(x_0) + \frac{C \epsilon^{1/n}}{\lambda_1} (1 - \exp(-\lambda_1 t)), \quad \forall t \geq 0, \quad \text{a.s.}
\]
where $0 < \lambda_1 < \lambda$. This completes the probabilistic part of the proof.

\end{proof}

\clearpage
\section{More Details About Experiments}

\subsection{Tasks}

\paragraph{Inverted Pendulum.} The state of the inverted pendulum is given by \( x = [\theta, \dot{\theta}] \), where \( \theta \) is the angular displacement from the upright position (measured counterclockwise from the vertical) and \( \dot{\theta} \) is the angular velocity. The control input is a torque \( u \) applied at the pivot. The system is parameterized by the pendulum mass \( m \), length \( l \), and moment of inertia \( I = m l^2 \). The dynamics are given by the second-order nonlinear equation given by \[
\dot{x} = \begin{bmatrix}
\dot{\theta} \\
- \frac{m g l}{I} \sin\theta + \frac{1}{I} u
\end{bmatrix},
\]
and \( g \) is the gravitational acceleration. We consider a stabilization task at the upright equilibrium. The goal set is \( x_* = 0 \), the safe set is defined as \(\mathcal{X}_{\text{safe}} = \{x : |\theta| \leq 0.5\,\text{rad} \land \|x\| \leq 2\} \), and the unsafe set is defined as \(\mathcal{X}_{\text{unsafe}} = \{x : |\theta| \geq \pi/2 \lor \|x\| \geq 2.5\} \).

\paragraph{Car (Kin).} The kinematic single-track car model is to catch a reference path. The reference path is parameterized by its linear velocity \( v_{\mathrm{ref}} \), acceleration \( a_{\mathrm{ref}} \), heading \( \psi_{\mathrm{ref}} \), and angular velocity \( \omega_{\mathrm{ref}} \).

The state of the path-centric model is defined as \( x = [x_e, y_e, \delta, v_e, \psi_e] \), where \( (x_e, y_e) \) denotes the position error in the Frenet frame, \( \delta \) is the steering angle, \( v_e \) is the velocity error and \( \psi_e \) is the heading error. The control input is \( u = [v_\delta, a_{\mathrm{long}}] \), representing the steering rate and the longitudinal acceleration, respectively. The dynamics takes the form \( \dot{x} = f(x) + g(x) u \), where

\begin{equation*}
        f(x) = \begin{bmatrix}
        v \cos(\psi_e) - v_{\mathrm{ref}} + \omega_{\mathrm{ref}} y_e \\
        v \sin(\psi_e) - \omega_{\mathrm{ref}} x_e \\
        0 \\
        -a_{\mathrm{ref}} \\
        \frac{v}{l_r + l_f} \tan(\delta) - \omega_{\mathrm{ref}}
    \end{bmatrix}, \quad
    g(x) = \begin{bmatrix}
        0 & 0 \\
        0 & 0 \\
        1 & 0 \\
        0 & 1 \\
        0 & 0
    \end{bmatrix},
\end{equation*}
with \( v = v_e + v_{\mathrm{ref}} \), and \( l_f \), \( l_r \) denoting the distances from the vehicle's center of mass to the front and rear axles, respectively. The goal state is defined as \( x_* = 0 \). Since the reference heading and position do not explicitly appear in the dynamics, this formulation describes a tracking task rather than a reach-avoid task.

\paragraph{Car (Slip).} The dynamic single-track (sideslip) car model is used for a trajectory tracking task, where the objective is to follow a reference path. This model captures more complex vehicle behavior than the kinematic model by incorporating lateral dynamics and tire slip, which become significant at higher speeds or during aggressive maneuvers. The state of the model is \( x = [x_e, y_e, \delta, v_e, \psi_e, \dot{\psi}_e, \beta] \), where \( x_e \), \( y_e \) are lateral and longitudinal errors, \( \delta \) is the steering angle, \( v_e \) is the velocity error, \( \psi_e \) is the heading error, \( \dot{\psi}_e \) is its time derivative, and \( \beta \) is the sideslip angle. The control inputs \( u = [v_\delta, a_{\mathrm{long}}] \) are the same as in the kinematic model. The dynamics takes the control form \( \dot{x} = f(x) + g(x) u \), where
\begin{align*}
    f(x) &= \begin{bmatrix}
        v \cos(\psi_e) - v_{\mathrm{ref}} + \omega_{\mathrm{ref}} y_e \\
        v \sin(\psi_e) - \omega_{\mathrm{ref}} x_e \\
        0 \\
        0 \\
        \dot{\psi}_e \\
        -\frac{\mu m}{v I_z (l_r + l_f)}(l_f^2 C_{Sf} g l_r + l_r^2 C_{Sr} g l_f)(\dot{\psi}_e + \omega_{\mathrm{ref}}) + \frac{\mu m}{I_z (l_r + l_f)}(l_r C_{Sr} g l_f - l_f C_{Sf} g l_r)\beta + \frac{\mu m}{I_z (l_r + l_f)}(l_f C_{Sf} g l_r) \delta \\
        \left( \frac{\mu}{v^2 (l_r + l_f)}(C_{Sr} g l_f l_r - C_{Sf} g l_r l_f) - 1 \right)(\dot{\psi}_e - \omega_{\mathrm{ref}}) - \frac{\mu}{v (l_r + l_f)}(C_{Sr} g l_f C_{Sf} g l_r) \beta + \frac{\mu}{v (l_r + l_f)}(C_{Sf} g l_r)\delta
    \end{bmatrix}, \\
    g(x) &= \begin{bmatrix}
        0 & 0 \\
        0 & 0 \\
        1 & 0 \\
        0 & 1 \\
        0 & 0 \\
        0 & 0 \\
        0 & 0
    \end{bmatrix},
\end{align*}
with \( v = v_e + v_{\mathrm{ref}} \), and parameters \( l_f, l_r, C_{Sf}, C_{Sr}, \mu \). \( g \) denotes gravitational acceleration. The goal state is defined as \( x_* = 0 \). 

\paragraph{Segway.} We consider the Segway obstacle avoidance task, where the system must move forward while avoiding a circular obstacle. Successful avoidance requires the Segway to temporarily tilt forward to maneuver around the obstruction. The state of the system is defined as \( x = [p, \theta, v, \omega] \), where \( p \) is the horizontal position, \( \theta \) is the tilt angle and \( v \), \( \omega \) are the corresponding linear and angular velocities. The control input \( u \) is a horizontal force applied at the base. We assume the wheel’s vertical position remains at zero and the Segway length is normalized to one. The obstacle is modeled as a circle of radius \( 0.1 \) centered at \( (0, 1) \). The position of the Segway’s top is given by \( (p_x, p_y) = (p + \sin\theta, \cos\theta) \). The unsafe set is defined as
\[
\mathcal{X}_{\mathrm{unsafe}} = \left\{x \mid \sqrt{p_x^2 + (p_y - 1)^2} \leq 0.1 \right\},
\]
and the safe set is
\[
\mathcal{X}_{\mathrm{safe}} = \left\{x \mid \sqrt{p_x^2 + (p_y - 1)^2} \geq 0.15 \right\}.
\]
Let \( M \) be the base mass, \( m \) and \( J \) the mass and moment of inertia of the body, and \( l \) the distance from the base to the center of mass. Define total mass \( M_t = M + m \) and total inertia \( J_t = J + m l^2 \). The gravitational constant is denoted \( g \). The dynamics are expressed in form as \( \dot{x} = f(x) + g(x)u \), where

\begin{equation*}
        f(x) = \begin{bmatrix}
        v\\
        \omega\\
        \frac{g \sin\theta \cos\theta + \lambda_1 v \cos\theta + \lambda_2 v - l \omega^2 \sin\theta}{\cos\theta - \frac{M_t J_t}{m^2 l^2} + \lambda_9} \\
        \frac{\lambda_3 v \cos\theta + \lambda_4 v - \frac{M_t g}{m l} \sin\theta - \omega^2 \sin\theta \cos\theta}{\cos^2\theta - \frac{M_t J_t}{m^2 l^2} + \lambda_9}
    \end{bmatrix},\quad 
    g(x) = \begin{bmatrix}
        0\\
        0\\
        \frac{\frac{\lambda_6}{M_t}(\lambda_5 + \cos\theta)}{\cos^2\theta - \frac{M_t J_t}{m^2 l^2} + \lambda_9} \\
        \frac{\frac{\lambda_8 l}{J_t}(\cos\theta + \lambda_7)}{\cos^2\theta - \frac{M_t J_t}{m^2 l^2} + \lambda_9}
    \end{bmatrix}.
\end{equation*}
Here, \(\lambda_i\) denote intermediate constants derived from model parameters and simplifications. This formulation captures the coupled nonlinear dynamics and their dependence on tilt for obstacle avoidance.

\paragraph{Neural Lander.} The state of the Neural Lander model is defined as \( x = [p_x, p_y, p_z, v_x, v_y, v_z] \), where \( [p_x, p_y, p_z] \) denotes position and \( [v_x, v_y, v_z] \) velocity. The control input is \( u = [f_x, f_y, f_z] \), with \( p_z \) defined to be positive in the upward direction. This model is parameterized by the mass \( m \) and system dynamics are expressed as \( \dot{x} = f(x) + g(x)u \), where 
\begin{equation*}
        f(x) = \begin{bmatrix}
        v_x \\ v_y \\ v_z \\
        F_{a1}/m \\ F_{a2}/m \\ F_{a3}/m - g
    \end{bmatrix},\quad
    g(x) = \begin{bmatrix}
        0 & 0 & 0 \\
        0 & 0 & 0 \\
        0 & 0 & 0 \\
        1/m & 0 & 0 \\
        0 & 1/m & 0 \\
        0 & 0 & 1/m
    \end{bmatrix},
\end{equation*}
and \( g \) is the gravitational acceleration. The term \( F_a = [F_{a1}, F_{a2}, F_{a3}] \) represents the aerodynamic disturbance due to the ground effect. The goal state is defined as \( x_* = 0 \), the safe set as \(\mathcal{X}_{\text{safe}} = \{x : p_z \geq -0.05 \land \|x\| \leq 3\} \), and the unsafe set as \( \mathcal{X}_{\text{unsafe}} = \{x : p_z \leq -0.3 \lor \|x\| \geq 3.5\} \).

\paragraph{2D Quadrotor.} The state vector of the 2D quadrotor model is defined as \( x = [p_x, p_z, \theta, v_x, v_z, \dot{\theta}] \), where \( [p_x, p_z] \) denotes position, \( \theta \) is the pitch angle, and \( [v_x, v_z] \), \( \dot{\theta} \) represent the corresponding velocities. The control input is \( u = [u_1, u_2] \), with \( p_z \), \( u_1 \), and \( u_2 \) taken to be positive in the upward direction. The system is characterized by the mass \( m \), moment of inertia \( I \), and rotor arm length \( r \). The dynamics are described by the system \( \dot{x} = f(x) + g(x) u \), where
\begin{equation*}
    f(x) = \begin{bmatrix} v_x \\ v_z \\ \dot{\theta} \\ 0 \\ -g \\ 0 \end{bmatrix},\quad 
    g(x) = 
    \begin{bmatrix}
        0 & 0 \\
        0 & 0 \\
        0 & 0 \\
        \frac{1}{m} \sin\theta & \frac{1}{m} \sin\theta \\
        \frac{1}{m} \cos\theta & \frac{1}{m} \cos\theta \\
        \frac{r}{I} & -\frac{r}{I}
    \end{bmatrix},
\end{equation*}
and \( g \) denotes the gravitational constant. The unsafe set \(\mathcal{X}_{\text{unsafe}}\) is defined as the region occupied by obstacles, while the safe set \(\mathcal{X}_{\text{safe}}\) is specified as a 0.1 m offset from the obstacle boundaries.

\paragraph{3D Quadrotor.} The state of the 9-dimensional quadrotor model is defined as \( x = [p_x, p_y, p_z, v_x, v_y, v_z, \phi, \theta, \psi] \), where \( p_i \) and \( v_i \) denote positions and velocities, and \( \phi, \theta, \psi \) are the roll, pitch, and yaw angles, respectively. The control input is given by \( u = [f, \dot{\phi}, \dot{\theta}, \dot{\psi}] \). This model is parameterized by the mass \( m \), and system dynamics are expressed as a form, \( \dot{x} = f(x) + g(x)u \), with
\begin{equation*}
     f(x) = \begin{bmatrix}
        v_x \\ v_y \\ v_z \\ 0 \\ 0 \\ -g \\ 0 \\ 0 \\ 0
    \end{bmatrix},\quad
    g(x) = \begin{bmatrix}
        0 & 0 & 0 & 0 \\
        0 & 0 & 0 & 0 \\
        0 & 0 & 0 & 0 \\
        -\frac{1}{m}\sin\theta & 0 & 0 & 0 \\
        \frac{1}{m}\cos\theta\sin\phi & 0 & 0 & 0 \\
        \frac{1}{m}\cos\theta\cos\phi & 0 & 0 & 0 \\
        0 & 1 & 0 & 0 \\
        0 & 0 & 1 & 0 \\
        0 & 0 & 0 & 1
    \end{bmatrix},
\end{equation*}
where \( g \) denotes the gravitational acceleration. The goal state is defined as \( x_* = \{0\} \), the safe set as \( \mathcal{X}_{\text{safe}} = \{x : p_z \geq 0 \land \|x\| \leq 3\} \), and the unsafe set as \( \mathcal{X}_{\text{unsafe}} = \{x : p_z \leq -0.3 \lor \|x\| \geq 3.5\} \).

\paragraph{F-16.} The F-16 aircraft is modeled as a nonlinear dynamical system with a 16-dimensional state vector \( x \in \mathbb{R}^{16} \) and a 4-dimensional control input vector \( u \in \mathbb{R}^4 \). Each component of the state vector has a physical interpretation as follows:

\[
x = [V_T, \alpha, \beta, \phi, \theta, \psi, P, Q, R, p_n, p_e, h, \text{pow}, \text{int}_{N_z}, \text{int}_{ps}, \text{int}_{Ny_r}]
\]
\begin{itemize}
  \item \( V_T \): True airspeed (ft/s) — the speed of the aircraft relative to the surrounding air.
  \item \( \alpha \): Angle of attack (rad) — the angle between the aircraft's longitudinal axis and the oncoming airflow.
  \item \( \beta \): Sideslip angle (rad) — the angle between the aircraft's heading and its actual flight path laterally.
  \item \( \phi \): Roll angle (rad) — the rotation of the aircraft about its longitudinal axis.
  \item \( \theta \): Pitch angle (rad) — the rotation of the aircraft about its lateral axis.
  \item \( \psi \): Yaw angle (rad) — the rotation about the vertical axis, determining heading.
  \item \( P \): Roll rate (rad/s) — angular velocity about the longitudinal (roll) axis.
  \item \( Q \): Pitch rate (rad/s) — angular velocity about the lateral (pitch) axis.
  \item \( R \): Yaw rate (rad/s) — angular velocity about the vertical (yaw) axis.
  \item \( p_n \): Northward position (ft) — horizontal displacement in the north direction.
  \item \( p_e \): Eastward position (ft) — horizontal displacement in the east direction.
  \item \( h \): Altitude (ft) — vertical position above sea level.
  \item \texttt{pow}: Engine power state — internal dynamic state representing engine thrust lag.
  \item \( \text{int}_{N_z} \): Integrator state for vertical acceleration (normal G-force) regulation.
  \item \( \text{int}_{ps} \): Integrator state for stability-axis roll rate tracking.
  \item \( \text{int}_{Ny_r} \): Integrator state for lateral acceleration and yaw rate tracking.
\end{itemize}

The control input vector is:
\[
u = [N_z^{\text{ref}}, p_s^{\text{ref}}, Ny_r^{\text{ref}}, \text{throttle}]
\]
\begin{itemize}
  \item \( N_z^{\text{ref}} \): Reference normal acceleration (G-forces) — controls pitch via the elevator.
  \item \( p_s^{\text{ref}} \): Reference roll rate — governs rolling motion via ailerons and rudder.
  \item \( Ny_r^{\text{ref}} \): Reference combination of side acceleration and yaw rate — maintains coordinated flight and yaw control.
  \item \texttt{throttle}: Throttle command — scalar value in [0, 1], where 0 represents idle and 1 represents full throttle.
\end{itemize}

This system is \textbf{not affine in control}; that is, it cannot be expressed as \( \dot{x} = f(x) + g(x)u \) because the inputs pass through a nonlinear autopilot that internally computes actuator commands, which are then applied via a complex, nonlinear aircraft model involving aerodynamic forces, actuator dynamics, and engine lag.

\textbf{Goal Set:}
\[
x_\star = \left\{ x \in \mathbb{R}^{16} \ \middle| \ \theta + \alpha \geq 0,\ |\phi| \leq 0.1,\ |P| \leq 0.25,\ h \geq 1000 \right\}
\]

\textbf{Safe Set:}
\[
\mathcal{X}_\text{safe} = \left\{ x \in \mathbb{R}^{16} \ \middle| \ h \geq 500,\ 0.95 \cdot x_{\min} \leq x \leq 0.95 \cdot x_{\max},\ x \notin \mathcal{B}_\text{goal} \right\}
\]
where \( \mathcal{B}_\text{goal} \) is a buffer region around the goal set defined as:
\[
\mathcal{B}_\text{goal} = \left\{ x \in \mathbb{R}^{16} \ \middle| \ \theta + \alpha \geq -0.2,\ |\phi| \leq 0.2,\ |P| \leq 0.5,\ h \geq 800 \right\}
\]

\textbf{Unsafe Set:}
\[
\mathcal{X}_\text{unsafe} = \left\{ x \in \mathbb{R}^{16} \ \middle| \ h \leq 100 \right\}
\]

\clearpage
\section{Implementation} \label{appendix:implementation}

\begin{algorithm}[H]
\caption{$S^2$Diff}
\KwIn{Distribution of initial states $\mathcal{D}_{x_0}$, model dynamics $f$, nominal policy, number of training epochs $K$}
\KwOut{Certificate function (CLBF) $V_K$}
\BlankLine

Initialize the certificate function (CLBF) $V_0$ with parameters $\theta$\;
\For{epoch $k = 1$ to $K$}{
    \tcp{=== Phase 1: Guided Trajectory Sampling ===}
    Initialize an empty dataset of new trajectories $\mathcal{D}$\;
    \For{each initial state $x_0$ in a batch sampled from $\mathcal{D}_{x_0}$}{
        Generate one full trajectory via a guided denoising process by maximizing Equations~\eqref{eq:sample_distribution} and \eqref{energy parameterization of stability}\;
        Sample a clean trajectory $U^0$ by applying the reverse diffusion process (Equation~\eqref{score function}) with model dynamics $f$, starting from noise\;
        The process is guided at each step by the current CLBF $V_{k-1}$\;
        Add the resulting trajectory $U^0$ to the dataset $\mathcal{D}$\;
    }
    
    \tcp{=== Phase 2: CLBF Update ===}
    Use the entire newly generated dataset $\mathcal{D}$ for training\;
    Update CLBF parameters by performing gradient descent on the loss from Equation~\eqref{loss function}, using trajectories from $\mathcal{D}$, obtain $V_k$\;
}
\end{algorithm}

In Phase~1, we use the current CLBF $V_{k-1}$ to guide a model-based diffusion~\citep{pan2024model} sampler to generate a batch of trajectories. 
This is done by sampling from a CLBF-shaped target distribution $p(U)$, defined in Equation~\eqref{eq:sample_distribution} and \eqref{energy parameterization of stability}, and using the reverse diffusion process. 
At each denoising step, we approximate the score $\nabla \log p(U^i)$ via sequential Monte Carlo, leveraging the known dynamics model $f(x,u)$.

Importantly, our diffusion process is not learned via neural network training. 
Instead, it is a fully algorithmic, model-based diffusion guided by the CLBF through the structure of the target distribution $p(U)$. 
This distribution is explicitly designed to assign higher likelihood to trajectories that satisfy the safety and stability criteria, i.e., $p_{\text{safe}}$ and $p_{\text{stable}}$. 
As a result, the diffusion process is biased toward generating CLBF-compliant trajectories without requiring gradient-based training of a score network. 
While there is no learned diffusion model, the objective $p(U)$ still plays a central role---it guides the sampling, not learning.

In Phase~2, the CLBF is updated using the sampled trajectories by minimizing the supervised loss defined in Equation~\eqref{loss function}. 
This alternating process is repeated over training epochs, allowing the CLBF and the sampler to iteratively improve.

\clearpage
\section{More Experiment Results} \label{Appendix: more result}
We also present additional experimental results, including extended ablation studies and further quantitative analysis. Note that all inference time evaluations were conducted on the same device: Intel i9‑13900 CPU with one RTX 4090 GPU. 

\subsection{More Ablation Studies} \label{Appendix: more ablation study}

\paragraph{Ablation in Trajectory Length} 
We conduct an ablation study to evaluate the impact of trajectory length on the performance of $S^2$Diff in neural lander, as shown in Table~\ref{temperature_ablation_transposed}. Short trajectories (length 2) lead to low evaluation time but suffer from poor safety ($35\%$) due to limited foresight. Increasing the trajectory length to 5 significantly improves both safety ($100\%$) and accuracy, achieving the best performance in terms of tracking error. However, further increasing the trajectory length results in diminishing returns for safety and degraded tracking performance, along with substantially higher evaluation time. This suggests that moderate trajectory lengths strike a good balance between efficiency, safety, and control accuracy, while excessively long rollouts introduce variance and unnecessary computational cost.
\begin{table}[h!]
\centering
\caption{Control performance of $S^2$Diff under different trajectory lengths in neural lander.}
\begin{tabular}{lccc}
\toprule
\textbf{Trajectory length} & Eval.\ time (ms) & Safety rate & $\|x - x_\star\|$ \\
\midrule
2  & $\boldsymbol{28.9\pm0.4}$  & 35\%  & $0.08 \pm 0.05$ \\
5  & $35.4\pm0.7$  & 100\% & $\boldsymbol{0.06 \pm 0.02}$ \\
10 & $105.1\pm2.4$ & 100\% & $0.09 \pm 0.06$ \\
15 & $126.3\pm1.5$ & 100\% & $0.17 \pm 0.09$ \\
20 & $237.5\pm6.8$ & 100\% & $0.26 \pm 0.07$ \\
50 & $598.6\pm25.0$ & 100\% & $0.35 \pm 0.17$ \\
\bottomrule
\end{tabular}
\label{temperature_ablation_transposed}
\end{table}

\paragraph{Parameterization of $V$.} We observed that the quadratic-form parameterization \( V(x) = w(x)^{\top} w(x) \), as used in \citep{dawson2022safe}, can result in a high failure probability for systems with non-convex constraints, such as the 2D quadrotor, Segway, and F-16. In contrast, employing a general neural network parameterization for \( V \) improves both training stability and control performance.

\clearpage
\subsection{Other Quantitative Results} \label{Appendix: more quantitative result}

\subsubsection{Inverted Pendulum}

\begin{figure}[H]
    \centering
    \includegraphics[width=1\linewidth]{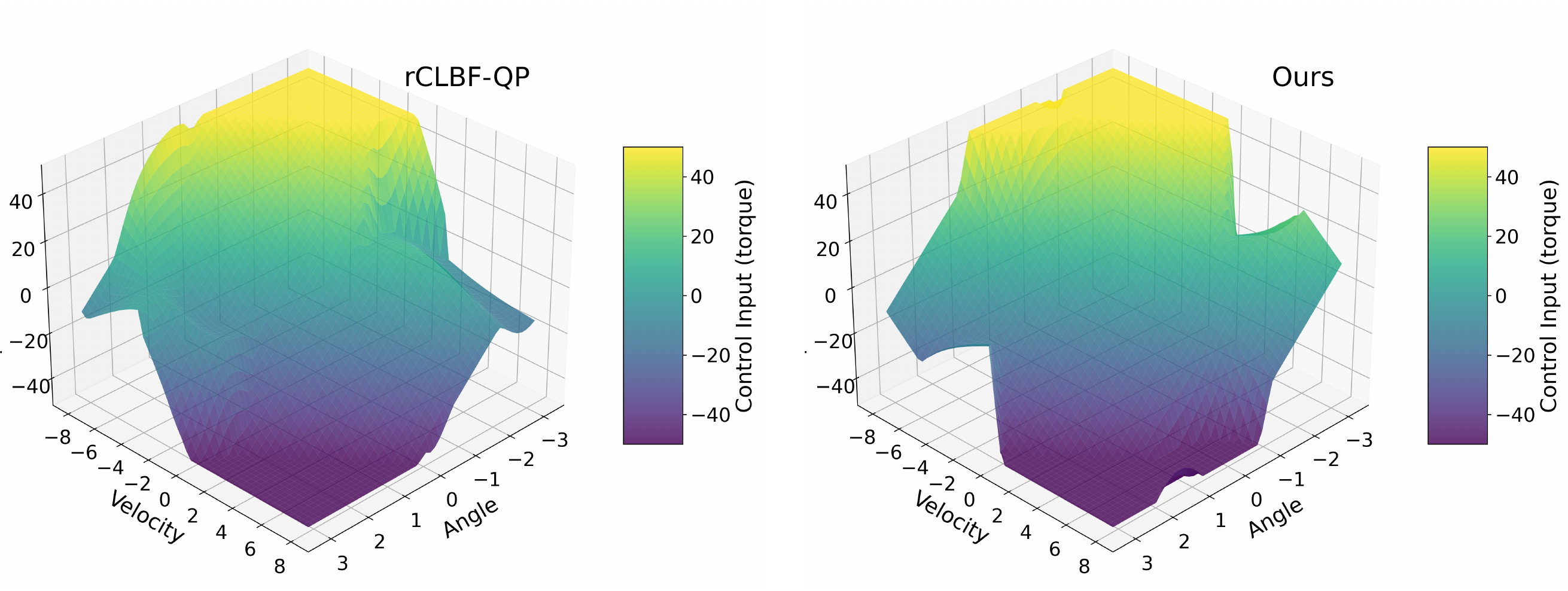}
    \caption{Comparison of policy landscapes for the inverted pendulum task — rCLBF-QP (left) vs. ours (right).  Each surface visualizes the controller’s output torque over a 2D slice of the state space (pendulum angle vs. angular velocity).  The surface height and color both represent the magnitude of the control input, providing a physical interpretation of how each policy maps states to actions.  Our method yields a smooth and symmetric policy around the upright equilibrium (angle $=0$, velocity $=0$), which is desirable for stability and generalization.  In contrast, the rCLBF-QP policy produces irregular and less structured patterns, likely due to its step-wise greedy updates and slack variable usage.}
    \label{fig:comparison_poliy_landscape}
\end{figure}

\subsubsection{Car Tracking}
Figure \ref{fig:car_tracking_rCLBF} and Figure \ref{fig:car_tracking_our} compare the control performance of rCLBF-QP and $S^2$Diff in car tracking tasks. While both methods closely follow the reference trajectory, $S^2$Diff achieves a more globally consistent performance with notably lower tracking error across time. The error profile of $S^2$Diff (Figure \ref{fig:car_tracking_our}, right) shows reduced oscillation and slower error accumulation compared to rCLBF-QP (Figure \ref{fig:car_tracking_rCLBF}, right), indicating improved stability and accuracy in long-horizon tracking.

\begin{figure}[H]
    \centering
    \includegraphics[width=1\linewidth]{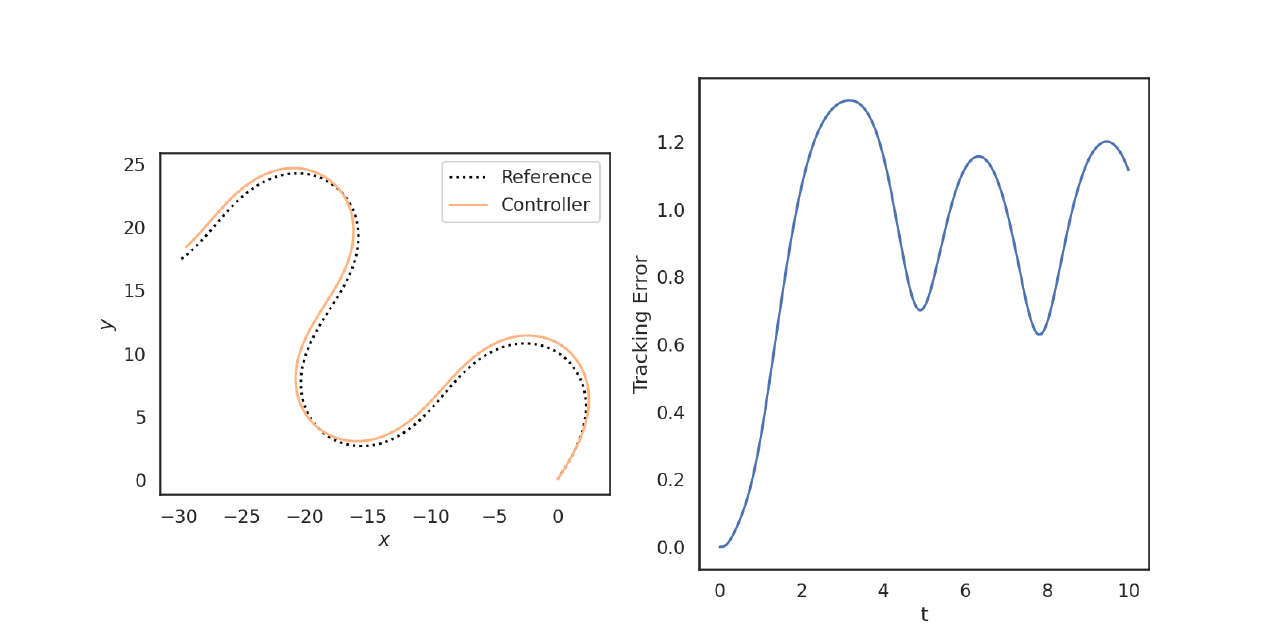}
    \caption{Control performance of rCLBF-QP in car tracking tasks. The left plot shows the reference and controlled trajectories, while the right plot depicts the tracking error over time.}
    \label{fig:car_tracking_rCLBF}
\end{figure}

\begin{figure}[H]
    \centering
    \includegraphics[width=1\linewidth]{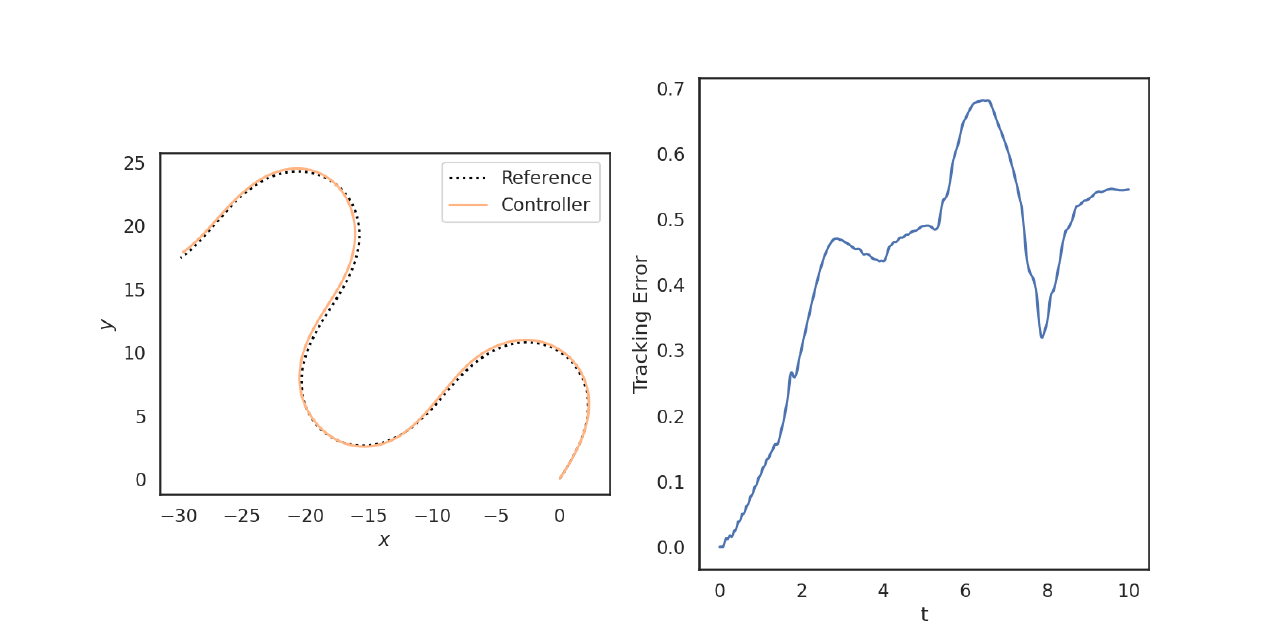}
    \caption{Control performance of $S^2$Diff in car tracking tasks. The left plot shows the reference and controlled trajectories, while the right plot depicts the tracking error over time.}
    \label{fig:car_tracking_our}
\end{figure}

\subsubsection{Segway}

\begin{figure}[H]
    \centering
    \includegraphics[width=1\linewidth]{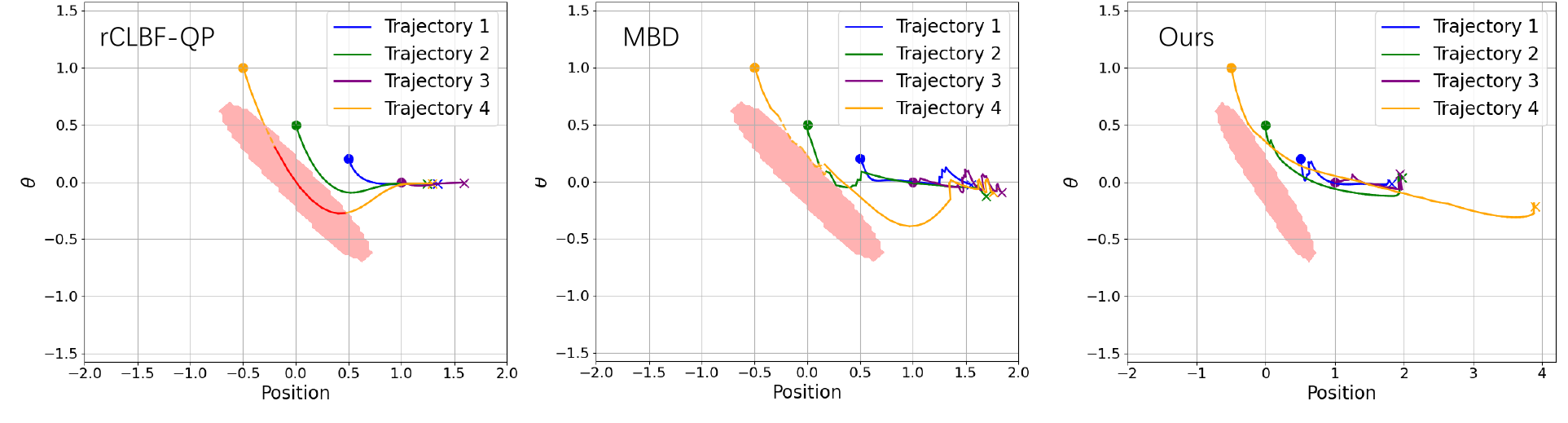}
    \caption{Comparison of control performance across three methods — rCLBF-QP (left), MBD (center), and Ours (right). The objective of this task is to stabilize the system to the target point $(2,0)$. The figure shows that rCLBF-QP struggles to ensure stability due to globally inconsistent behavior induced by its step-wise QP formulation. MBD approaches the boundary of the unsafe set and exhibits unstable performance with higher variance. In contrast, our method achieves safe and consistent control, with most initial conditions stabilizing close to the target point $(2,0)$. This demonstrates that our algorithm ensures more globally consistent behavior and delivers more stable performance than MBD, guided by the learned CLBF.}
    \label{fig:enter-label}
\end{figure}

\subsubsection{Quadrotor}
\vspace{2em}
\begin{figure}[H]
    \centering
    \includegraphics[width=0.7\linewidth]{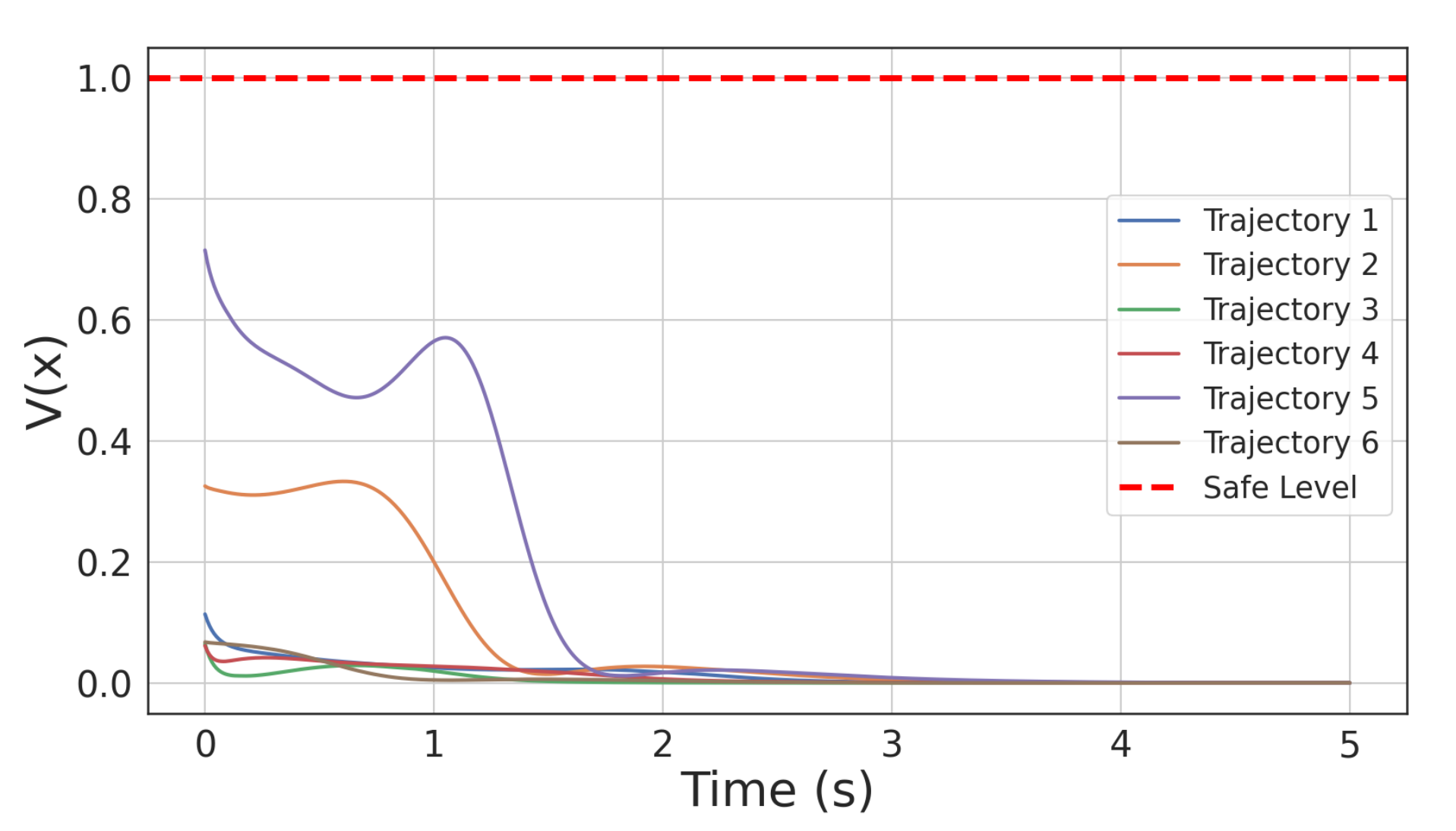}
    \caption{Change in the scalar value of the CLBF under the control policy of $S^2$Diff for the 2D quadrotor. The color of each trajectory corresponds to Figure \ref{quad2d_traj_combo}. The scalar values of the CLBF from different initial conditions under $S^2$Diff exhibit a nearly monotonic decrease, closely aligning with Lyapunov theory.}
    \label{2d_quad_V}
\end{figure}

\subsubsection{Neural Lander}

\begin{figure}[H]
    \centering
    \includegraphics[width=1\linewidth]{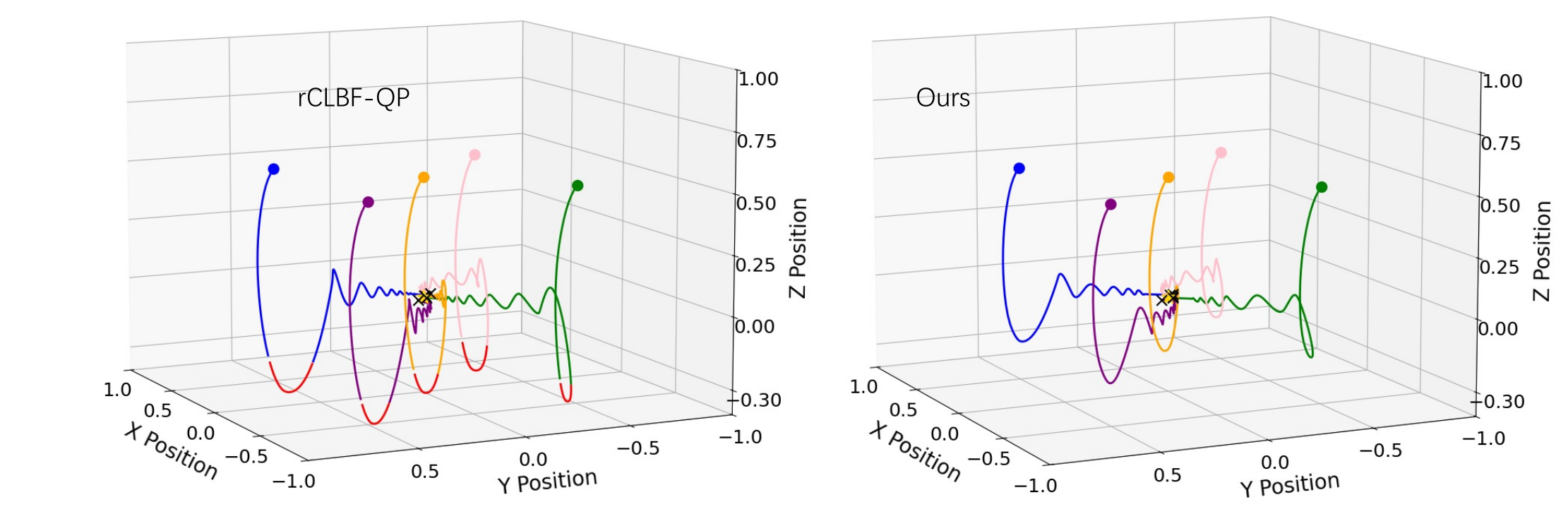}
    \caption{Comparison of 3D trajectories under rCLBF-QP (left) and our method (right). Each colored trajectory represents a system evolution from a different initial condition, with the black star indicating the target. The rCLBF-QP controller fails to guarantee global safety when the slack variable is reduced, leading to constraint violations and unsafe behavior. In contrast, our method consistently maintains safe and stable trajectories across all initial conditions.}
    \label{fig:neural_lander}
\end{figure}

\subsubsection{F-16}

\begin{figure}[H]
    \centering
    \includegraphics[width=1\linewidth]{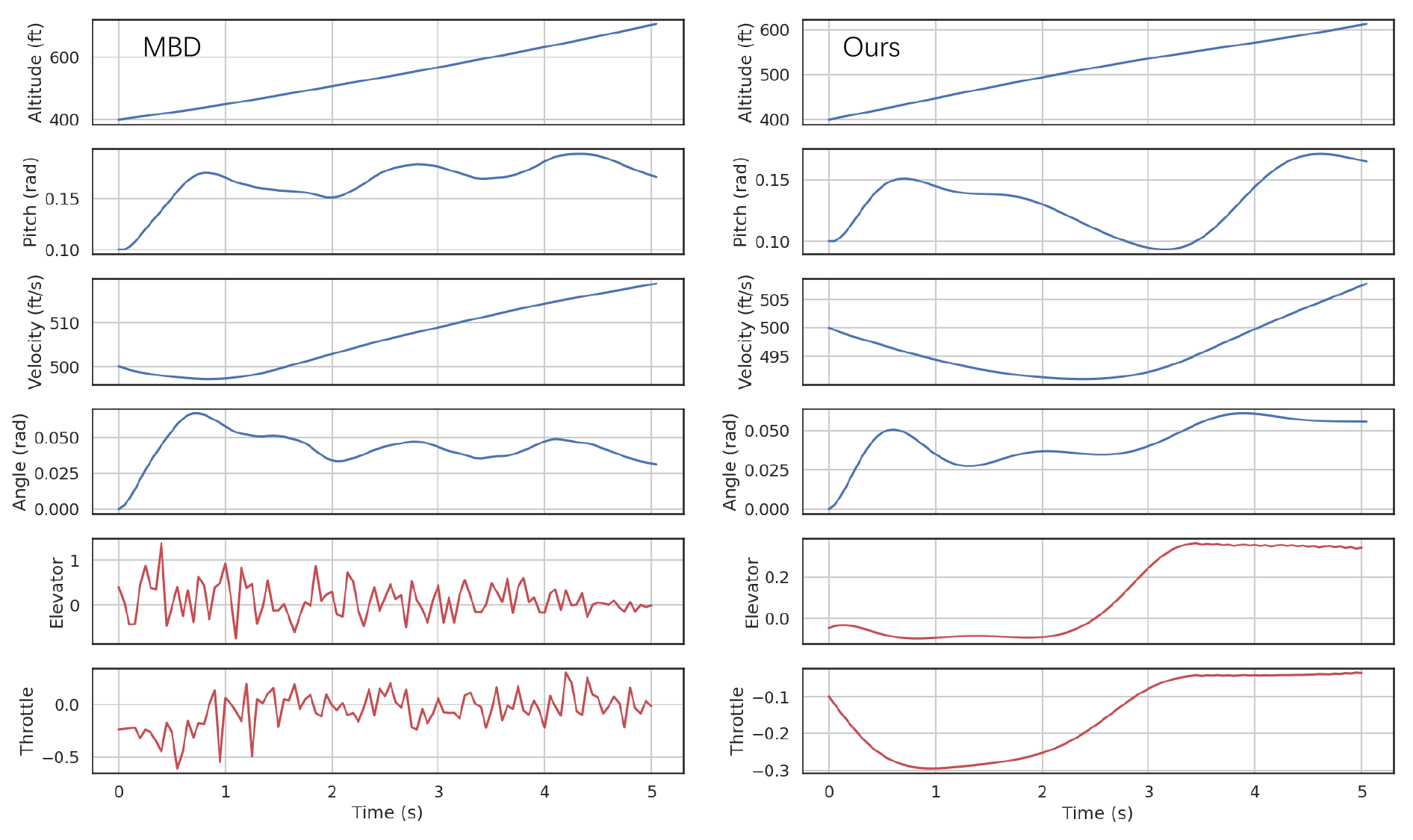}
    \caption{Comparison of F-16 control performance between MBD (left) and our method (right). The plots illustrate the evolution of altitude, pitch angle, velocity, angle of attack, and control inputs (elevator and throttle) over time. Our method achieves smoother and more stable flight dynamics, with significantly less oscillation in pitch and angle of attack, and well-regulated velocity recovery. Control inputs remain continuous and low-frequency, indicating efficient and robust control. In contrast, MBD shows high-frequency fluctuations in both elevator and throttle signals, suggesting aggressive or unstable control behavior.}
    \label{fig:F16_2d}
\end{figure}

\clearpage
\section{Implementation of Baselines.}
\paragraph{rCLBF-QP.} The implementation of this baseline, including network architectures and hyperparameters, strictly follows the settings described in the original paper \cite{dawson2022safe}. The corresponding code is publicly available at \url{https://github.com/MIT-REALM/neural_clbf/tree/main}.

\paragraph{MPC.} This baseline employs model predictive control (MPC), solved using the Gurobi optimizer at each time step. The implementation strictly follows the settings described in~\cite{dawson2022safe}, including a time discretization of $0.1$ seconds and a lookahead horizon of 5 steps. The MPC is configured to track reference trajectories while satisfying both dynamic and control constraints.

\paragraph{MBD.} This baseline is implemented following the methodology and hyperparameter settings outlined in the original paper~\citep{pan2025model}. To ensure a fair comparison with MPC, the MBD controller uses a time step of $0.1$ seconds and a planning horizon of 5 steps. The controller is trained to generate control sequences that track reference trajectories while satisfying learned dynamics and safety constraints. The implementation code is available at \url{https://github.com/LeCAR-Lab/model-based-diffusion}.

\paragraph{Ours.} The following Table \ref{tab:hyperparams_compact} provides the experimental details, including the network architecture and other relevant hyperparameters. 

\begin{table}[h!]
\centering
\renewcommand{\arraystretch}{1.3}
\caption{Hyperparameter configurations for eight tasks using $S^2$Diff.}
\begin{tabular}{lccccc}
\toprule
\textbf{Task} & \textbf{NN Architecture\textsuperscript{†}} & \textbf{Safe Level $c$} & \textbf{Temp.} & \textbf{Loss Coeff.\textsuperscript{‡}} & \textbf{Traj. Len.} \\
\midrule
Inverted Pendulum & 3×64     & 1  & 0.1 & 1,1 & 5 \\
Car (Kin.)         & 3×64     & 1  & 0.1 & 1,1 & 5 \\
Car (Slip)         & 3×64     & 1  & 0.1 & 1,1 & 5 \\
Segway             & 3×64     & 1  & 0.1 & 1,1 & 5 \\
Neural Lander      & 3×64     & 10 & 0.1 & 1,1 & 5 \\
2D Quad            & 3×64     & 1  & 0.1 & 1,1 & 5 \\
3D Quad            & 3×64     & 10 & 0.1 & 1,1 & 5 \\
F-16               & 3×128    & 10 & 0.1 & 1,1 & 5 \\
\bottomrule
\end{tabular}
\label{tab:hyperparams_compact}
\end{table}

\begin{flushleft}
\footnotesize
\textsuperscript{†} Format “L×H” denotes L layers with H hidden units per layer. \\
\textsuperscript{‡} Two coefficients ($\alpha_1, \alpha_2$) represent weights for Lie derivative and discrete differences and CLBF losses, respectively.
\end{flushleft}

================================================================================================================================================================================================
\newpage
\section*{NeurIPS Paper Checklist}

\begin{enumerate}

\item {\bf Claims}
    \item[] Question: Do the main claims made in the abstract and introduction accurately reflect the paper's contributions and scope?
    \item[] Answer: \answerYes{}.
    \item[] Justification:  The abstract and introduction clearly state the key contributions of the paper: the introduction of $S^2$Diff, a novel diffusion-based control framework that jointly learns and utilizes certificate functions inspired by Almost Lyapunov theory. The claims about avoiding control-affine assumptions, enabling safe and stable control, and offering a flexible probabilistic formulation are well-supported by both the theoretical exposition and the empirical results. The limitations of diffusion speed are also acknowledged, along with a potential remedy via policy distillation, aligning well with the scope of the current work.

    \item[] Guidelines:
    \begin{itemize}
        \item The answer NA means that the abstract and introduction do not include the claims made in the paper.
        \item The abstract and/or introduction should clearly state the claims made, including the contributions made in the paper and important assumptions and limitations. A No or NA answer to this question will not be perceived well by the reviewers. 
        \item The claims made should match theoretical and experimental results, and reflect how much the results can be expected to generalize to other settings. 
        \item It is fine to include aspirational goals as motivation as long as it is clear that these goals are not attained by the paper. 
    \end{itemize}

\item {\bf Limitations}
    \item[] Question: Does the paper discuss the limitations of the work performed by the authors?
    \item[] Answer: \answerYes{} 
    \item[] Justification: The paper includes a discussion of its main limitation in conclusion.
    \item[] Guidelines:
    \begin{itemize}
        \item The answer NA means that the paper has no limitation while the answer No means that the paper has limitations, but those are not discussed in the paper. 
        \item The authors are encouraged to create a separate "Limitations" section in their paper.
        \item The paper should point out any strong assumptions and how robust the results are to violations of these assumptions (e.g., independence assumptions, noiseless settings, model well-specification, asymptotic approximations only holding locally). The authors should reflect on how these assumptions might be violated in practice and what the implications would be.
        \item The authors should reflect on the scope of the claims made, e.g., if the approach was only tested on a few datasets or with a few runs. In general, empirical results often depend on implicit assumptions, which should be articulated.
        \item The authors should reflect on the factors that influence the performance of the approach. For example, a facial recognition algorithm may perform poorly when image resolution is low or images are taken in low lighting. Or a speech-to-text system might not be used reliably to provide closed captions for online lectures because it fails to handle technical jargon.
        \item The authors should discuss the computational efficiency of the proposed algorithms and how they scale with dataset size.
        \item If applicable, the authors should discuss possible limitations of their approach to address problems of privacy and fairness.
        \item While the authors might fear that complete honesty about limitations might be used by reviewers as grounds for rejection, a worse outcome might be that reviewers discover limitations that aren't acknowledged in the paper. The authors should use their best judgment and recognize that individual actions in favor of transparency play an important role in developing norms that preserve the integrity of the community. Reviewers will be specifically instructed to not penalize honesty concerning limitations.
    \end{itemize}

\item {\bf Theory assumptions and proofs}
    \item[] Question: For each theoretical result, does the paper provide the full set of assumptions and a complete (and correct) proof?
    \item[] Answer: \answerYes{} 
    \item[] Justification: We provide assumptions and complete proof in Appendix \ref{Appendix: Theoretical Guarantees}. 
    \item[] Guidelines:
    \begin{itemize}
        \item The answer NA means that the paper does not include theoretical results. 
        \item All the theorems, formulas, and proofs in the paper should be numbered and cross-referenced.
        \item All assumptions should be clearly stated or referenced in the statement of any theorems.
        \item The proofs can either appear in the main paper or the supplemental material, but if they appear in the supplemental material, the authors are encouraged to provide a short proof sketch to provide intuition. 
        \item Inversely, any informal proof provided in the core of the paper should be complemented by formal proofs provided in appendix or supplemental material.
        \item Theorems and Lemmas that the proof relies upon should be properly referenced. 
    \end{itemize}

    \item {\bf Experimental result reproducibility}
    \item[] Question: Does the paper fully disclose all the information needed to reproduce the main experimental results of the paper to the extent that it affects the main claims and/or conclusions of the paper (regardless of whether the code and data are provided or not)?
    \item[] Answer: \answerYes{} 
    \item[] Justification: All code and data are provided in supplementary materials. 
    \item[] Guidelines:
    \begin{itemize}
        \item The answer NA means that the paper does not include experiments.
        \item If the paper includes experiments, a No answer to this question will not be perceived well by the reviewers: Making the paper reproducible is important, regardless of whether the code and data are provided or not.
        \item If the contribution is a dataset and/or model, the authors should describe the steps taken to make their results reproducible or verifiable. 
        \item Depending on the contribution, reproducibility can be accomplished in various ways. For example, if the contribution is a novel architecture, describing the architecture fully might suffice, or if the contribution is a specific model and empirical evaluation, it may be necessary to either make it possible for others to replicate the model with the same dataset, or provide access to the model. In general. releasing code and data is often one good way to accomplish this, but reproducibility can also be provided via detailed instructions for how to replicate the results, access to a hosted model (e.g., in the case of a large language model), releasing of a model checkpoint, or other means that are appropriate to the research performed.
        \item While NeurIPS does not require releasing code, the conference does require all submissions to provide some reasonable avenue for reproducibility, which may depend on the nature of the contribution. For example
        \begin{enumerate}
            \item If the contribution is primarily a new algorithm, the paper should make it clear how to reproduce that algorithm.
            \item If the contribution is primarily a new model architecture, the paper should describe the architecture clearly and fully.
            \item If the contribution is a new model (e.g., a large language model), then there should either be a way to access this model for reproducing the results or a way to reproduce the model (e.g., with an open-source dataset or instructions for how to construct the dataset).
            \item We recognize that reproducibility may be tricky in some cases, in which case authors are welcome to describe the particular way they provide for reproducibility. In the case of closed-source models, it may be that access to the model is limited in some way (e.g., to registered users), but it should be possible for other researchers to have some path to reproducing or verifying the results.
        \end{enumerate}
    \end{itemize}

\item {\bf Open access to data and code}
    \item[] Question: Does the paper provide open access to the data and code, with sufficient instructions to faithfully reproduce the main experimental results, as described in supplemental material?
    \item[] Answer: \answerYes{} 
    \item[] Justification: See supplementary materials.
    \item[] Guidelines:
    \begin{itemize}
        \item The answer NA means that paper does not include experiments requiring code.
        \item Please see the NeurIPS code and data submission guidelines (\url{https://nips.cc/public/guides/CodeSubmissionPolicy}) for more details.
        \item While we encourage the release of code and data, we understand that this might not be possible, so “No” is an acceptable answer. Papers cannot be rejected simply for not including code, unless this is central to the contribution (e.g., for a new open-source benchmark).
        \item The instructions should contain the exact command and environment needed to run to reproduce the results. See the NeurIPS code and data submission guidelines (\url{https://nips.cc/public/guides/CodeSubmissionPolicy}) for more details.
        \item The authors should provide instructions on data access and preparation, including how to access the raw data, preprocessed data, intermediate data, and generated data, etc.
        \item The authors should provide scripts to reproduce all experimental results for the new proposed method and baselines. If only a subset of experiments are reproducible, they should state which ones are omitted from the script and why.
        \item At submission time, to preserve anonymity, the authors should release anonymized versions (if applicable).
        \item Providing as much information as possible in supplemental material (appended to the paper) is recommended, but including URLs to data and code is permitted.
    \end{itemize}

\item {\bf Experimental setting/details}
    \item[] Question: Does the paper specify all the training and test details (e.g., data splits, hyperparameters, how they were chosen, type of optimizer, etc.) necessary to understand the results?
    \item[] Answer: \answerYes{} 
    \item[] Justification: The paper specifies the necessary experimental details to understand and reproduce the results.
    \item[] Guidelines:
    \begin{itemize}
        \item The answer NA means that the paper does not include experiments.
        \item The experimental setting should be presented in the core of the paper to a level of detail that is necessary to appreciate the results and make sense of them.
        \item The full details can be provided either with the code, in appendix, or as supplemental material.
    \end{itemize}

\item {\bf Experiment statistical significance}
    \item[] Question: Does the paper report error bars suitably and correctly defined or other appropriate information about the statistical significance of the experiments?
    \item[] Answer: \answerYes{} 
    \item[] Justification: The paper reports statistical variability for key experimental results using error bars that reflect standard deviation over multiple runs with different random seeds.
    \item[] Guidelines:
    \begin{itemize}
        \item The answer NA means that the paper does not include experiments.
        \item The authors should answer "Yes" if the results are accompanied by error bars, confidence intervals, or statistical significance tests, at least for the experiments that support the main claims of the paper.
        \item The factors of variability that the error bars are capturing should be clearly stated (for example, train/test split, initialization, random drawing of some parameter, or overall run with given experimental conditions).
        \item The method for calculating the error bars should be explained (closed form formula, call to a library function, bootstrap, etc.)
        \item The assumptions made should be given (e.g., Normally distributed errors).
        \item It should be clear whether the error bar is the standard deviation or the standard error of the mean.
        \item It is OK to report 1-sigma error bars, but one should state it. The authors should preferably report a 2-sigma error bar than state that they have a 96\% CI, if the hypothesis of Normality of errors is not verified.
        \item For asymmetric distributions, the authors should be careful not to show in tables or figures symmetric error bars that would yield results that are out of range (e.g. negative error rates).
        \item If error bars are reported in tables or plots, The authors should explain in the text how they were calculated and reference the corresponding figures or tables in the text.
    \end{itemize}

\item {\bf Experiments compute resources}
    \item[] Question: For each experiment, does the paper provide sufficient information on the computer resources (type of compute workers, memory, time of execution) needed to reproduce the experiments?
    \item[] Answer: \answerYes{} 
    \item[] Justification: The paper provides details of the compute resources used in the experiments. 
    \item[] Guidelines:
    \begin{itemize}
        \item The answer NA means that the paper does not include experiments.
        \item The paper should indicate the type of compute workers CPU or GPU, internal cluster, or cloud provider, including relevant memory and storage.
        \item The paper should provide the amount of compute required for each of the individual experimental runs as well as estimate the total compute. 
        \item The paper should disclose whether the full research project required more compute than the experiments reported in the paper (e.g., preliminary or failed experiments that didn't make it into the paper). 
    \end{itemize}
    
\item {\bf Code of ethics}
    \item[] Question: Does the research conducted in the paper conform, in every respect, with the NeurIPS Code of Ethics \url{https://neurips.cc/public/EthicsGuidelines}?
    \item[] Answer: \answerYes{} 
    \item[] Justification: The research fully conforms to the NeurIPS Code of Ethics.
    \item[] Guidelines:
    \begin{itemize}
        \item The answer NA means that the authors have not reviewed the NeurIPS Code of Ethics.
        \item If the authors answer No, they should explain the special circumstances that require a deviation from the Code of Ethics.
        \item The authors should make sure to preserve anonymity (e.g., if there is a special consideration due to laws or regulations in their jurisdiction).
    \end{itemize}

\item {\bf Broader impacts}
    \item[] Question: Does the paper discuss both potential positive societal impacts and negative societal impacts of the work performed?
    \item[] Answer: \answerYes{} 
    \item[] Justification: The paper includes a discussion of broader impacts. On the positive side, the proposed framework has the potential to improve the safety and stability of learning-based control systems, which could benefit applications such as robotics, autonomous vehicles, and other real-world systems where safety is critical. On the negative side, like other advanced control techniques, there is a risk that the method could be misapplied in high-stakes systems without sufficient verification, potentially leading to unintended behavior. While this work is primarily theoretical and demonstrated in simulation, awareness of deployment risks is important. We believe transparency and further research into verification and robustness can help mitigate these concerns.
    \item[] Guidelines:
    \begin{itemize}
        \item The answer NA means that there is no societal impact of the work performed.
        \item If the authors answer NA or No, they should explain why their work has no societal impact or why the paper does not address societal impact.
        \item Examples of negative societal impacts include potential malicious or unintended uses (e.g., disinformation, generating fake profiles, surveillance), fairness considerations (e.g., deployment of technologies that could make decisions that unfairly impact specific groups), privacy considerations, and security considerations.
        \item The conference expects that many papers will be foundational research and not tied to particular applications, let alone deployments. However, if there is a direct path to any negative applications, the authors should point it out. For example, it is legitimate to point out that an improvement in the quality of generative models could be used to generate deepfakes for disinformation. On the other hand, it is not needed to point out that a generic algorithm for optimizing neural networks could enable people to train models that generate Deepfakes faster.
        \item The authors should consider possible harms that could arise when the technology is being used as intended and functioning correctly, harms that could arise when the technology is being used as intended but gives incorrect results, and harms following from (intentional or unintentional) misuse of the technology.
        \item If there are negative societal impacts, the authors could also discuss possible mitigation strategies (e.g., gated release of models, providing defenses in addition to attacks, mechanisms for monitoring misuse, mechanisms to monitor how a system learns from feedback over time, improving the efficiency and accessibility of ML).
    \end{itemize}
    
\item {\bf Safeguards}
    \item[] Question: Does the paper describe safeguards that have been put in place for responsible release of data or models that have a high risk for misuse (e.g., pretrained language models, image generators, or scraped datasets)?
    \item[] Answer: \answerNA{} 
    \item[] Justification: The models and data used in this work do not pose a high risk of misuse.
    \item[] Guidelines:
    \begin{itemize}
        \item The answer NA means that the paper poses no such risks.
        \item Released models that have a high risk for misuse or dual-use should be released with necessary safeguards to allow for controlled use of the model, for example by requiring that users adhere to usage guidelines or restrictions to access the model or implementing safety filters. 
        \item Datasets that have been scraped from the Internet could pose safety risks. The authors should describe how they avoided releasing unsafe images.
        \item We recognize that providing effective safeguards is challenging, and many papers do not require this, but we encourage authors to take this into account and make a best faith effort.
    \end{itemize}

\item {\bf Licenses for existing assets}
    \item[] Question: Are the creators or original owners of assets (e.g., code, data, models), used in the paper, properly credited and are the license and terms of use explicitly mentioned and properly respected?
    \item[] Answer: \answerYes{} 
    \item[] Justification: All external assets used in this work—such as code and datasets—are properly cited in the paper. The licenses and terms of use are respected and, where applicable, included in the references or appendix.
    \item[] Guidelines:
    \begin{itemize}
        \item The answer NA means that the paper does not use existing assets.
        \item The authors should cite the original paper that produced the code package or dataset.
        \item The authors should state which version of the asset is used and, if possible, include a URL.
        \item The name of the license (e.g., CC-BY 4.0) should be included for each asset.
        \item For scraped data from a particular source (e.g., website), the copyright and terms of service of that source should be provided.
        \item If assets are released, the license, copyright information, and terms of use in the package should be provided. For popular datasets, \url{paperswithcode.com/datasets} has curated licenses for some datasets. Their licensing guide can help determine the license of a dataset.
        \item For existing datasets that are re-packaged, both the original license and the license of the derived asset (if it has changed) should be provided.
        \item If this information is not available online, the authors are encouraged to reach out to the asset's creators.
    \end{itemize}

\item {\bf New assets}
    \item[] Question: Are new assets introduced in the paper well documented and is the documentation provided alongside the assets?
    \item[] Answer: \answerYes{} 
    \item[] Justification: The paper introduces new code assets implementing the proposed $S^2$Diff framework. Documents are provided in supplementary materials. 
    \item[] Guidelines:
    \begin{itemize}
        \item The answer NA means that the paper does not release new assets.
        \item Researchers should communicate the details of the dataset/code/model as part of their submissions via structured templates. This includes details about training, license, limitations, etc. 
        \item The paper should discuss whether and how consent was obtained from people whose asset is used.
        \item At submission time, remember to anonymize your assets (if applicable). You can either create an anonymized URL or include an anonymized zip file.
    \end{itemize}

\item {\bf Crowdsourcing and research with human subjects}
    \item[] Question: For crowdsourcing experiments and research with human subjects, does the paper include the full text of instructions given to participants and screenshots, if applicable, as well as details about compensation (if any)? 
    \item[] Answer: \answerNA{} 
    \item[] Justification: This work does not involve any crowdsourcing or research with human subjects. All experiments are conducted in simulated environments without human data or interaction.
    \item[] Guidelines:
    \begin{itemize}
        \item The answer NA means that the paper does not involve crowdsourcing nor research with human subjects.
        \item Including this information in the supplemental material is fine, but if the main contribution of the paper involves human subjects, then as much detail as possible should be included in the main paper. 
        \item According to the NeurIPS Code of Ethics, workers involved in data collection, curation, or other labor should be paid at least the minimum wage in the country of the data collector. 
    \end{itemize}

\item {\bf Institutional review board (IRB) approvals or equivalent for research with human subjects}
    \item[] Question: Does the paper describe potential risks incurred by study participants, whether such risks were disclosed to the subjects, and whether Institutional Review Board (IRB) approvals (or an equivalent approval/review based on the requirements of your country or institution) were obtained?
    \item[] Answer: \answerNA{} 
    \item[] Justification: This work does not involve human subjects, user studies, or the collection of human data. Therefore, IRB approval is not applicable.
    \item[] Guidelines:
    \begin{itemize}
        \item The answer NA means that the paper does not involve crowdsourcing nor research with human subjects.
        \item Depending on the country in which research is conducted, IRB approval (or equivalent) may be required for any human subjects research. If you obtained IRB approval, you should clearly state this in the paper. 
        \item We recognize that the procedures for this may vary significantly between institutions and locations, and we expect authors to adhere to the NeurIPS Code of Ethics and the guidelines for their institution. 
        \item For initial submissions, do not include any information that would break anonymity (if applicable), such as the institution conducting the review.
    \end{itemize}

\item {\bf Declaration of LLM usage}
    \item[] Question: Does the paper describe the usage of LLMs if it is an important, original, or non-standard component of the core methods in this research? Note that if the LLM is used only for writing, editing, or formatting purposes and does not impact the core methodology, scientific rigorousness, or originality of the research, declaration is not required.
    \item[] Answer: \answerNA{} 
    \item[] Justification: LLMs were not used as part of the core methodology, experiments, or technical contributions of this work. Any use of LLMs was solely for writing refinement and editing purposes.
    \item[] Guidelines:
    \begin{itemize}
        \item The answer NA means that the core method development in this research does not involve LLMs as any important, original, or non-standard components.
        \item Please refer to our LLM policy (\url{https://neurips.cc/Conferences/2025/LLM}) for what should or should not be described.
    \end{itemize}

\end{enumerate}

\end{document}